\title{Smoothed Analysis of the \AGP/}
\author[1]{Michael Gene Dobbins}
\author[2]{Andreas Holmsen}
\author[3]{Tillmann Miltzow}
\affil[1]{Department of Mathematical Sciences, Binghamton University}
\affil[2]{Department of Mathematical Sciences, KAIST }
\affil[3]{Department of Information and Computing Sciences, Utrecht University}
\date{}
\newtheorem{thm}{Theorem}
\newtheorem*{thmNoNumber}{Theorem}
\newtheorem{lemma}[thm]{Lemma}
\newtheorem{question}{Question}
\def \ER/{\ensuremath{\exists \mathbb{R}}}
\def \NP/{\ensuremath{\mbox{NP}}}
\def \PSPACE/{\ensuremath{\mbox{PSPACE}}}
\def \MinkInfl/{Minkowski-Inflation}
\def \EdgeInfl/{Edge-Inflation}
\def \EdgeDefl/{Edge-Deflation}
\def \EdgePert/{Edge-Perturbation}
\def \VertexPert/{Vertex-Perturbation}
\def \VertexInfl/{Vertex-Inflation}
\def \NaiveAlgo/{Naive Algorithm}
\def \SmoothAna/{Smoothed Analysis}
\def \AGP/{Art Gallery Problem}
\def \realRAM/{real RAM}
\def \wordRAM/{word RAM}
\def \TuringMachine/{Turing Machine}
\newcommand {\R}{\ensuremath{\mathbb{R}}}
\newcommand {\Q}{\ensuremath{\mathbb{Q}}}
\newcommand {\E}{\ensuremath{\mathbb{E}}}
\newcommand {\N}{\ensuremath{\mathbb{N}}}
\newcommand {\Z}{\ensuremath{\mathbb{Z}}}
\newcommand {\disk}{\ensuremath{\mbox{disk}}}
\newcommand {\dist}{\ensuremath{\mbox{dist}}}
\begin{document}

\maketitle

\begin{abstract}
    In the \AGP/ 
    we are given a polygon $P\subset [0,L]^2$ on $n$ vertices and a number $k$.
    We want to find a guard set $G$ of size $k$,
    such that each point in $P$ is \emph{seen}
    by a guard in $G$.
    Formally, a guard $g$ sees a point 
    $p \in P$ if the line segment $pg$ is fully contained inside $P$.
    
    The only, currently known, correct algorithm 
    to solve the \AGP/ exactly, uses algebraic methods~\cite{EfratH02,EfratH06}. 
    Due to the recent result that the \AGP/ 
    is \ER/-complete~\cite{ARTETR},
    it seems unlikely that algebraic methods can be avoided 
    for any exact algorithm.
    On the other hand, any example that requires irrational
    coordinates, or has a unique solution is {\em vulnerable}, 
    to small perturbations, i.e. looses that property
    after a random perturbation. Furthermore, it took more than 
    four decades to come up with an example where 
    irrational coordinates are required~\cite{abrahamsen2017irrational}. 
    Lastly, there is a series of papers that
    implement algorithms that give optimal solutions 
    to medium sized simulated
    instances ($5000$ vertices)~\cite{engineering}.
    The history and practical findings therefore
    indicate that irrational coordinates 
    are a ``very rare'' phenomenon to find 
    in an optimal solution.
    In this paper we give a theoretical explanation. 
    
    Next to worst case analysis, \SmoothAna/ gained popularity
    to explain the practical performance of algorithms, even 
    if they perform badly in the worst case. 
    \SmoothAna/ is an interpolation between average case analysis
    and worst case analysis. 
    The idea is to study the expected performance on small perturbations 
    of the {\em worst} input.
    The performance is measured in terms of the magnitude $\delta$ of 
    the perturbation and the input size.
    We consider four different models of perturbation.
    ({\bf (a)} \MinkInfl/, {\bf (b)} \EdgeInfl/, {\bf (c)} \EdgePert/, and {\bf (d)} \VertexPert/.)
    We show that the expected number of bits to describe 
    optimal guard positions equals 
    %
    \[ {\bf (a),(b),(c)} \ O\left(   \log\left(\frac{nL}{\delta }\right)     \right) , 
    \quad  {\bf (d)} \ O\left(   \log\left(\frac{nL}{\delta \beta}\right) \right),\]
    per guard,
    where the value $\beta$ denotes the minimum
    of the interior and exterior angle of the polygon.
    This shows from a theoretical perspective
    that rational guards with small bit-complexity
    are \emph{typical}.
    Note that describing the guard position is the bottleneck
    to show \NP/-membership.
    To illustrate our findings more, we consider a \emph{discrete}
    model of \EdgeInfl/. We can show that the resulting 
    instances can be solved with high probability in 
    non-deterministic polynomial time, by the {\em \NaiveAlgo/}.
    The \NaiveAlgo/ guesses the correct guard positions 
    and checks if they are guarding the entire polygon.
    As a simple corollary, we can show that there is 
    an algorithm that solves the \AGP/ in expected non-deterministic 
    polynomial time on a \TuringMachine/, however, that algorithm uses 
    algebraic methods with small probability.
    In a {\em continuous} model of perturbation, we show
    that the \NaiveAlgo/ runs in expected non-deterministic 
    time, but spends and additional $O(n^4)$ time on a
    \emph{\realRAM/}, in order to check that the guards are correct.
    Avoiding the \realRAM/ is impossible for continuous 
    perturbations as the coordinates of the vertices
    of the resulting polygons are given
    by real numbers.
    As a corollary, we show correctness of an approximation algorithm
    in the smoothed analysis setting.
    
    The significance of our results is that algebraic methods are
    not needed to solve the \AGP/ in {\em typical} instances.
    This is the first time an \ER/-complete problem
    was analyzed by \SmoothAna/.
\end{abstract}

\section{Introduction}
\paragraph{Definition of the \AGP/}
    In the \AGP/ 
    we are given a polygon $P$ and a number $k$.
    We want to find a guard set $G$ of size $k$,
    such that each point in $P$ is \emph{seen}
    by a guard in $G$.
    Formally, a guard $g$ sees a point 
    $p \in P$ if the line segment $pg$ is fully contained inside $P$.
    We usually denote the vertices of $P$
    by $v_1,\ldots,v_n$, and 
    the number of vertices by $n$.


\subsection{Brief history of the \AGP/}
\label{sec:History-Art}
 The \AGP/ has been one of the core problems in Computational
 Geometry ever since its introduction by Victor Klee in 1973.
 Much of its popularity may stem from the metaphor,
 referring to an actual gallery, with actual guards.
 However, bear in mind
 that the problem is, although motivated by practice,
 a theoretical one about visibility and geometric set-cover.
 One of the earliest results 
 states that any polygon on $n$
 vertices can be guarded by $\lfloor n/3 \rfloor$ guards,
 and sometimes that many 
 guards are needed~\cite{chvatal1975combinatorial}.
 A beautiful five line proof of this fact was given by Steve Fisk~\cite{Fisk78a}.
 
 A significant amount of the literature on the \AGP/
 studies variants of the problem, but
 to survey these results lies outside the scope of this paper.
 Variants can be defined by restricting the position of the
 guards, restricting the shape of the polygon, and 
 altering the notion of visibility. The majority of results either give 
 combinatorial results  or study algorithmic aspects.
 Recently, Bhattacharya, Ghosh and Pal announced a breakthrough,
 where they reported a constant factor-approximation algorithm
 for the case when the guards are restricted to lie on the vertices~\cite{PritamConstantFactor}.
 For further references we recommend the survey and books on the \AGP/ 
 and visibility problems in general~\cite{o1987art, 2004handbook}.

Regarding the classical variant, surprisingly little is known.
In 1979 appeared a linear time algorithm
to guard a polygon with one guard~\cite{LeePreparataoptimal},
and an $O(n^4)$ algorithm to guard a polygon 
with two guards appeared
in the 1991 Master thesis of Belleville~\cite{belleville1991computing,BellevilleCCCG}.
Up until 2002 it was not even known whether the 
\AGP/ is decidable. Decidability of the 
\AGP/ was pointed out by Micha Sharir,
who showed how to encode the \AGP/
into the first order theory of the reals~\cite[See Acknowledgment]{EfratH02}. 
As there exist algorithms to decide the first order theory of
the reals, the \AGP/ is decidable~\cite{basu2006algorithms}.
In particular, the algorithm runs in $n^{O(k)}$ time,
where $n$ is the number of vertices and $k$ is 
the number of guards.
(Here, we assume that every vertex is encoded using 
$O(\log n)$ bits.)
It was also shown that the \AGP/ is W[1]-hard,
and not solvable in time $f(k)n^{o(k/\log k)}$, for any 
computable function $f$ unless
the Exponential Time Hypothesis fails~\cite{BonnetM16}.
(The Exponential Time Hypothesis states essentially that
$3$SAT on $N$ variables cannot be solved in $2^{o(N)}$.)
This indicates that the running time of Sharir's algorithm
is best possible, but it does not resolve the question,
whether algebraic methods are required.

The first approximation algorithm was established by 
Bonnet and Miltzow in 2016. Using some mild assumptions,
they showed that a fine grid contains a solution that
is at most a factor $9$ away from the optimum~\cite{BonnetM17Approx}.
Combined with a result of Efrat and Har-Peled~\cite{EfratH06}, this yielded
the first approximation algorithm for the \AGP/.

From the complexity perspective the problem was studied 
almost exclusively from the perspective of
\NP/-hardness and inapproximability~\cite{LeeLin86,SchuchardtH95,eidenbenz2001inapproximability,aggarwal1984art},
while the question of NP-membership was rarely asked.
A first doubt of \NP/-membership was raised in 2017, 
when Abrahamsen, Adamaszek and Miltzow
showed that
there exists polygons with vertices given by integer 
coordinates, that can be guarded by three guards, 
in which case some guards must necessarily have 
irrational coordinates~\cite{abrahamsen2017irrational}. 
(It is an open problem whether irrational guards may be required 
for polygons which can be guarded by two guards.)

Shortly after,  the same authors could show that
the \AGP/ is complete for the complexity class \ER/~\cite{ARTETR}.
\ER/-completeness can also be stated as follows. The 
\AGP/ is equivalent, under polynomial
time reductions, to deciding whether a system of polynomial
equations has a solution.
This implies that it seems unlikely that the algebraic algorithms
used by Sharir to solve the \AGP/ can be avoided,
to give optimal solutions in the \emph{worst case}.
We will discuss the complexity class \ER/ later in more detail. 
Assuming \NP/ $\neq$ \ER/, we can conclude that 
the \AGP/ is not contained in~\NP/.
 
While those theoretical results are quite negative, the history
and practical experiences tell a more positive story.
First of all, it took more than four decades before an example
could be found that requires irrational guards~\cite{abrahamsen2017irrational}.
Moreover, that example is highly vulnerable to small
perturbations, meaning that small perturbations
lead to a polygon that can be guarded optimally by 
guards with rational coordinates. 
Regarding the practical study of the \AGP/, we want 
to point out that several researchers have implemented heuristics,
that were capable of finding optimal solutions
for a large class of simulated 
instances~\cite{engineering, PracticalARTMasterFriedrich,
PracticalARTbottino2011, PracticalARTkroller2012, PracticalARTcouto2011,
PracticalARTbottino2008, PracticalARTcouto2008, PracticalARTamit2010, PracticalBorrmann}.
Even up to $5000$ vertices.

 Let us point out that several researchers have asked whether the \AGP/ 
 requires irrational
 coordinates~\cite{SandorEmails, dagstuhlSeminar,OpenProblem,engineering,DiscretizeTerrain}. 
 While Abrahamsen, Adamaszek and Miltzow~\cite{ARTETR,abrahamsen2017irrational}
 gave a negative answer for the worst case scenario, we 
 give a positive answer for {\em typical} instances.
 
We summarize that there is a large discrepancy between
the theoretical findings that the \AGP/ is \ER/-complete and the practical observation
that there {\em usually} exists an optimal solution
with rational coordinates. Our results explain this discrepancy.

\subsection{Background on 
the Existential Theory of the Reals: \ER/}
\label{sec:ETR}

A new complexity class began to emerge around the existential 
theory of the reals in the 1980s \cite{shor1991stretchability}, 
and more recently the notation \ER/ was introduced 
along with the formal 
definition~\cite{schaefer2013realizability}.\footnote{Previously, 
ETR (Existential Theory of the Reals) had been used ambiguously 
to refer to the formal language, the corresponding decision problem, or its algorithmic complexity.}
The class \ER/ is the class of all decision  
problems that are many-one reducible in polynomial time to deciding
whether a given polynomial $Q\in \Z[x_1,\ldots,x_n]$
has a real root, i.e. a solution $x\in\R^n$ such that $Q(x) = 0$.
From the field of real algebraic geometry~\cite{basu2006algorithms},
we know that \[\NP/ \subseteq \ER/ \subseteq \PSPACE/ .\]
At its core, hardness for \ER/ provides an explanation for
why some problems may not lie in \NP/. 

One of the most famous algorithmic questions is the recognition
of {\em segment intersection graphs}. Here, we are given a graph
and we are asked to decide whether there exists a set of segments which 
represent the graph in the following way: every segment represents a vertex of the graph and two segments intersect if and only if 
their corresponding vertices are adjacent.
It is easy to believe that the problem lies in \NP/.
By a simple perturbation argument, it can be assumed
that all coordinates of segment endpoints 
are represented by integers, and as long 
as the number of bits needed to represent them can be bounded by
some polynomial, we would be done. 
Indeed, Matou{\v{s}}ek~\cite{DBLP:journals/corr/Matousek14} 
comments that 
\vspace{0.3cm}
\begin{adjustwidth}{1cm}{1cm}
``Serious people seriously conjectured that the number of digits can be polynomially bounded---but it cannot.''
\end{adjustwidth}
\vspace{0.3cm}
Indeed McDiarmid and 
Müller have shown that 
the number of bits needed
to represent certain families of segment intersection graphs
is at least $2^{\Omega(n)}$~\cite{mcdiarmid2013integer}.
Similar large coordinate phenomena have been observed also
for other geometric problems~\cite{goodman1990intrinsic}.
However, this does not exclude that those 
respective problems lie in \NP/, as it may be possible 
to describe a different certificate of polynomial size.
Indeed recognition of string graphs is in \NP/~\cite{schaefer2003recognizing, PachString,PachString-GD, Schaefer-String-STOC}, although
there are families which require an {exponential 
number of crossings} in any string representation~\cite{kratochvil1991string}. 

The complexity class \ER/ provides a tool to give much 
more compelling arguments that a problem may not lie in \NP/
than merely observing that the naive way of placing
the problem into \NP/ does not work.
Indeed various problems have been shown to be 
\ER/-complete~\cite{AnnaPreparation,AreasKleist,shitov2016universality, richter1995realization,garg2015etr,schaefer2013realizability, cardinal2017intersection, cardinal2017recognition, kang2011sphere}
and thus either non of them lie in \NP/ or all of them do.

Another important aspect of \ER/-complete problems is that
we have no chance of solving even small instances (with the current methods), 
as the constants in the known algebraic algorithms are too large.

Most relevant in our context is the \ER/-completeness
that was shown by Abrahamsen, Adamaszek and Miltzow in
2017~\cite{ARTETR}. 
In 1987, O'Rourke~\cite[page 232]{o1987art} commented 
on the \NP/-membership in his famous
book on the \AGP/ as follows:
\vspace{0.3cm}
\begin{adjustwidth}{1cm}{1cm}
``The usual first step in a proof of \NP/-completeness is to 
show that the problem is a member of the class of \NP/ problems, 
that is, solvable via a non-deterministic algorithm in 
polynomial time 
(\ldots).
Often this is 
easy, merely requiring a demonstration that a solution ``guessed" 
by a non-deterministic program can be checked in polynomial time.
(\ldots)
however, it is unclear how to establish this.''
\end{adjustwidth}
\vspace{0.3cm}
The \ER/-completeness implies that there is no
algorithm, which runs in polynomial non-deterministic time 
and can solve the \AGP/ always, unless $\NP/ = \ER/$. 
This result is our main motivation, to see if there is a 
simple algorithm that solves the \AGP/. As we don't expect 
that such an algorithm is correct in the worst case, 
we turn our attention to different ways to analyze algorithms. 

\subsection{\SmoothAna/}
\label{sec:Smoothed-Analysis}
Some algorithms perform 
much better than predicted by their
worst case analysis. The most famous example seems
to be the Simplex-Algorithm. It is an algorithm
that solves   
linear programming efficiently in practice, although it is known that there
are instances for seemingly all variants of
the algorithm that take an exponential amount 
of time (see for instance~\cite{klee1970good}). 
There are several possible 
ways to explain this behavior. For example, it could be 
that all practical instances have some structural 
properties, which we have not yet discovered. 
We could imagine that a more clever analysis of 
the Simplex-Algorithm would yield that it runs 
in polynomial time, assuming the property 
is presented. To the best of our knowledge such 
a property has not yet been identified.
Another approach would be to argue that worst 
case examples are just very ``rare in practice''.
The problem with this approach is that it is
difficult to formalize. 

\paragraph{Average Case Analysis}
The first approach is to assume that a 
``practical instance'' is drawn uniformly 
at random from all possible instances.
This is also called average case analysis.
The first problem with this approach is that
we have to choose a probability space.
The second problem is that practical instances
are often highly structured. For instance,
planar graphs play an important role in many situations,
but almost never appear if each edge is drawn randomly
with probability $1/2$.
The third problem is that it only says that a
big portion of the probability space behaves nicely,
but there might still be a big region in the 
space of instances which is really bad.
And maybe that region is actually of practical relevance.

\paragraph{\SmoothAna/}
The second approach is a nice combination of the 
average case and the worst case analysis
and generally referred to as \SmoothAna/,
as it {\em smoothly} interpolates between the two. 
It was developed by Spielman and Teng~\cite{spielman2004smoothed}, who introduced the field in 
their celebrated seminal paper ``\SmoothAna/ of algorithms: 
Why the simplex algorithm usually takes polynomial time".
Both authors received the
G\"{o}del Prize in 2008, 
and the paper was one of the winners of the Fulkerson Prize
in 2009. In 2010 Spielman received
the Nevanlinna Prize for developing \SmoothAna/.

The smoothed expected running time can be defined as follows:
Let us fix some $\delta$, which describes the maximum 
magnitude of perturbation. 
We denote by $(\Omega_\delta,\mu_\delta)$ a 
corresponding probability space where 
each $x\in \Omega_\delta$ defines for each instance $I$  
a new `perturbed' instance 
$I_x$.
We denote by $T(I_x)$, 
the time to solve the instance $I_x$.
Now the smoothed expected running time of instance $I$ equals
\[ T_\delta (I) = \mathop{\E}_{x \in \Omega_\delta} \, T(I_x) = \int_{x\in \Omega_\delta} T(x) \mu_\delta(x).\]
If we denote by $\Gamma_n$ the set of instances of size $n$,
then the smoothed running time equals:
\[T_{\mbox{smooth}}(n) = \max_{I\in \Gamma_n} \, \mathop{\E}_{x\in \Omega_\delta} T(I_x). \]
Roughly speaking this can be interpreted
as saying, that not only do the majority
of instances have to behave nicely, but actually
in every neighborhood the majority of instances
behave nicely.
The expected running time is measured in terms of $n$ and $\delta$.
If the expected running time is small in terms of $1/\delta$
then this  means that difficult instances are \emph{fragile}
with respect to perturbations.
This serves as theoretical explanation
why such instances may not appear in practice.

It is not a priori clear how this probability 
space $\Omega_\delta$ should be defined.
It is easy, if our object is a point in Euclidean
space and proximity can be defined by Euclidean distance,
but less obvious if your object is a permutation or
a polygonal region in the plane.

Although the concept of \SmoothAna/ is more complicated
than simple worst case analysis, it is a new success story in
theoretical computer science. 
It could be shown that various algorithms actually 
run in polynomial time, explaining very well their practical 
performance.

One of the highlights is an analysis of 
the Nemhauser-Ullmann Algorithm~\cite{nemhauser1969discrete}
for the knapsack problem running in 
smoothed polynomial time~\cite{KnapsackSmooth}.
Those results could be generalized, yielding that
every binary optimization problem can be solved
in smoothed polynomial time if and only if it can be solved
in pseudopolynomial time~\cite{beier2006typical}.
Other famous examples are the \SmoothAna/ 
of $k$-means algorithm~\cite{arthur2006worst}, 
the $2$-OPT TSP local
search algorithm~\cite{englert2007worst}, and the 
local search algorithm for MaxCut~\cite{MaxCUTsmoothed}.
Recently, the smoothed number of faces on the convex hull of a point set
was analyzed under Gaussian noise~\cite{devillers2016smoothed}.
The currently best analysis of the Simplex Algorithm
yields a running time of $O(d^2 \sqrt{\log n}\, \sigma^{-2} + d^5 \log^{3/2}n)$,
with $d$ the number of variables, $\sigma^2$ the variance of Gaussian noise, 
and $n$ the number of constraints~\cite{dadush2018friendly}.
We refer the interested reader to surveys, mini-courses,
and lecture notes~\cite{manthey2011smoothed, SurveySmoothTim,
smoothCourseHeiko, smoothCourseTim}.

\subsection{Defintions}
\label{sec:Definitions}

    We now define the models of perturbation
    which we find relevant for the \AGP/. See 
    Figure~\ref{fig:Overview-Perturbation-Models} for an illustration. 
    We will also explicitly state 
    our assumptions on the underlying polygons.
    For all our probability spaces $\Omega_\delta$, we 
    are using the uniform distribution.
    
    \begin{figure}[tbh]
        \centering
        
        \begin{subfigure}[t]{0.4\textwidth}
            \centering
            \includegraphics[page = 2]{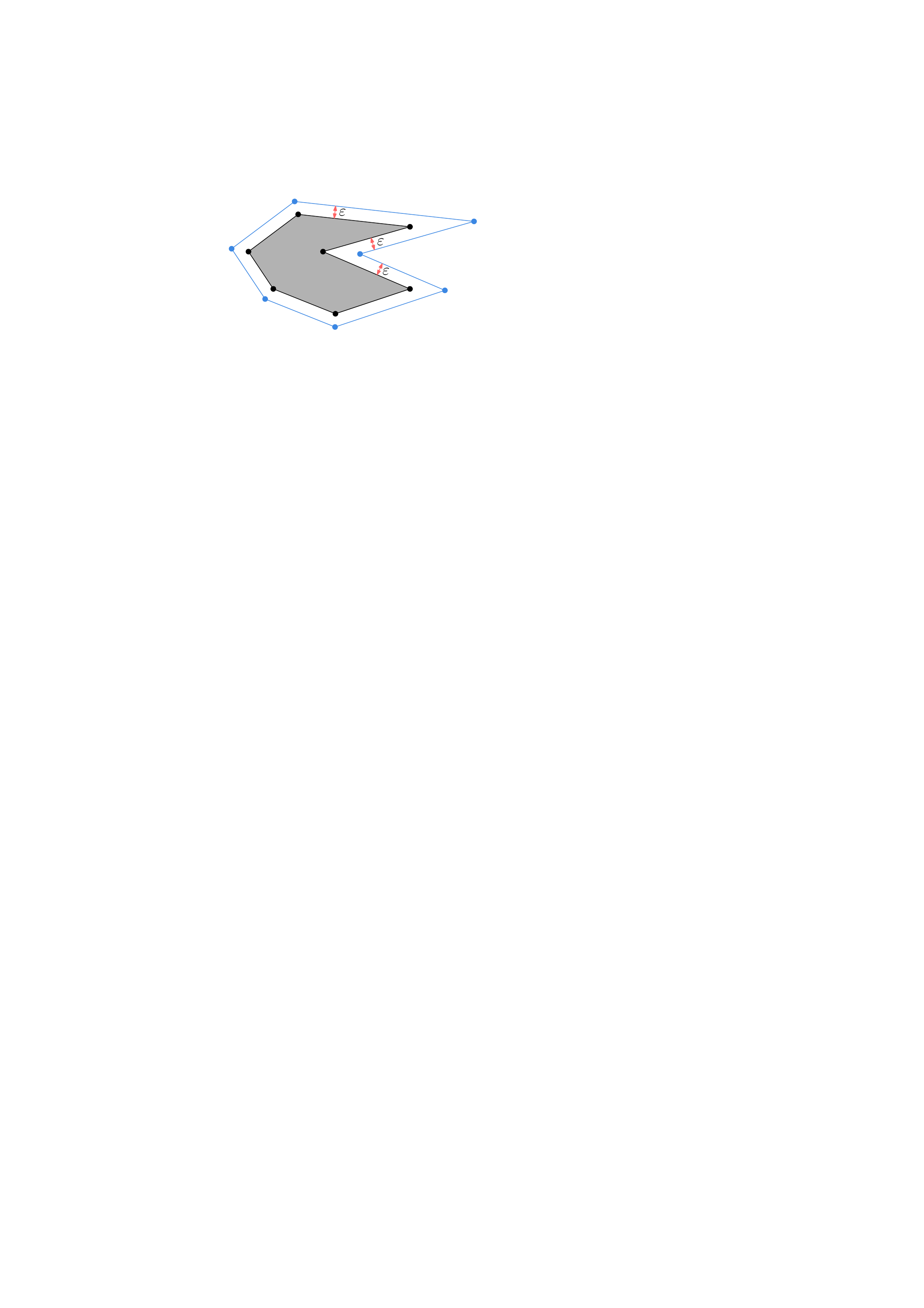} 
            \caption{The polygon together with an \EdgeInfl/.}
            \label{fig:Edge-Inflation}
        \end{subfigure}
        \hspace{0.1cm}
       \begin{subfigure}[t]{0.4\textwidth}
            \centering
            \includegraphics{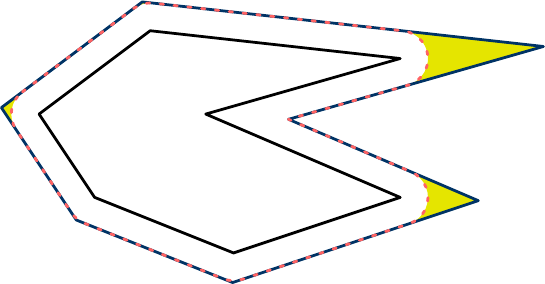}
            \caption{If we continue the edges, of a \MinkInfl/, 
            we get an \EdgeInfl/.}
            \label{fig:Edge-Minkowski-Difference}
        \end{subfigure}
       
        \begin{subfigure}[t]{0.3\textwidth}
            \centering
            \includegraphics[page = 2]{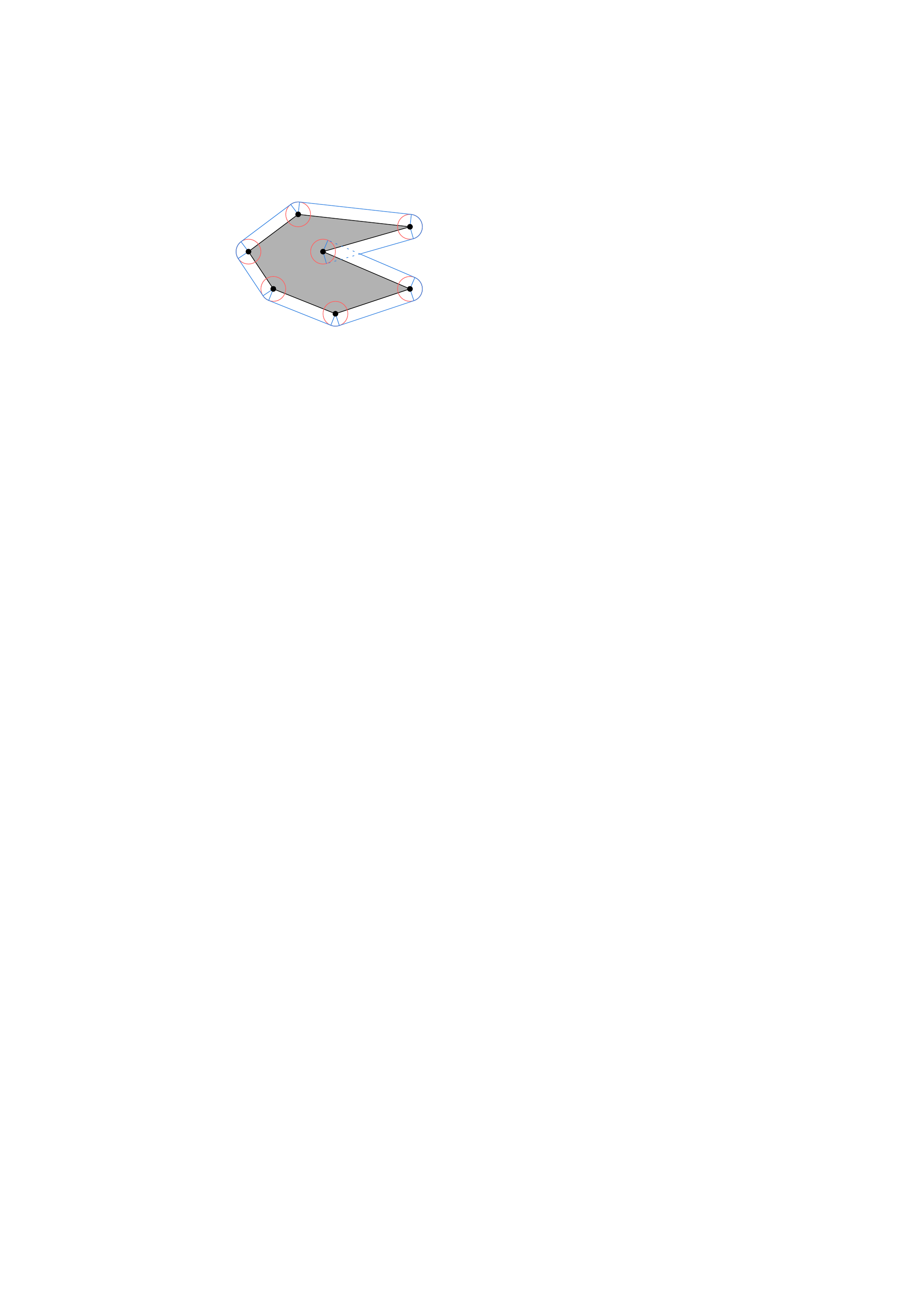}    
            \caption{The Minkowski-sum of a polygon 
            together with a disk.}
            \label{fig:Minkowski-Inflation}
        \end{subfigure}
        \hspace{0.1cm}
        \begin{subfigure}[t]{0.3\textwidth}
            \centering
            \includegraphics[page = 2]{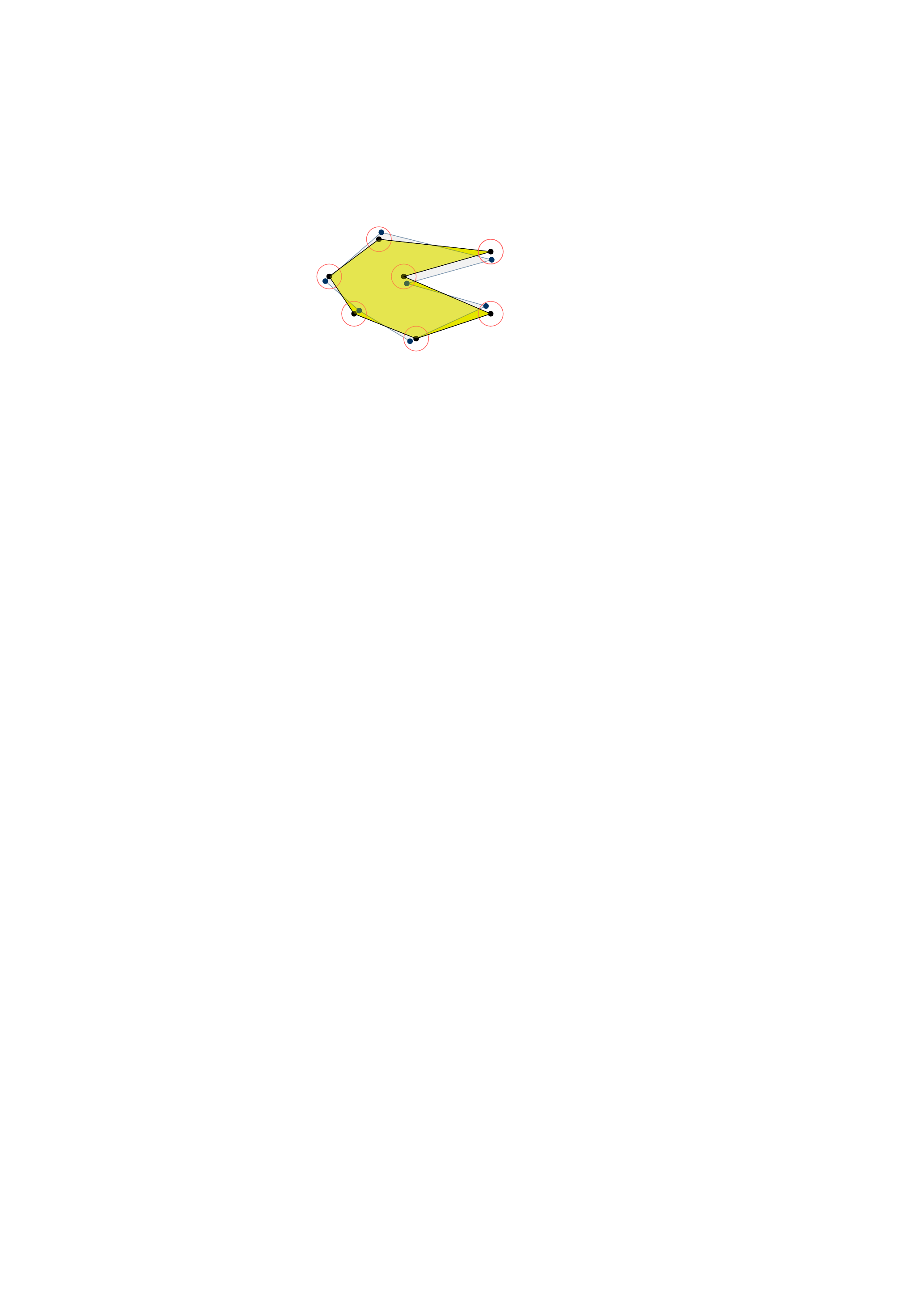} 
            \caption{A \VertexPert/.}
            \label{fig:Vertex-Perturbation}
        \end{subfigure}
        \hspace{0.1cm}
        \begin{subfigure}[t]{0.3\textwidth}
            \centering
            \includegraphics[page = 2]{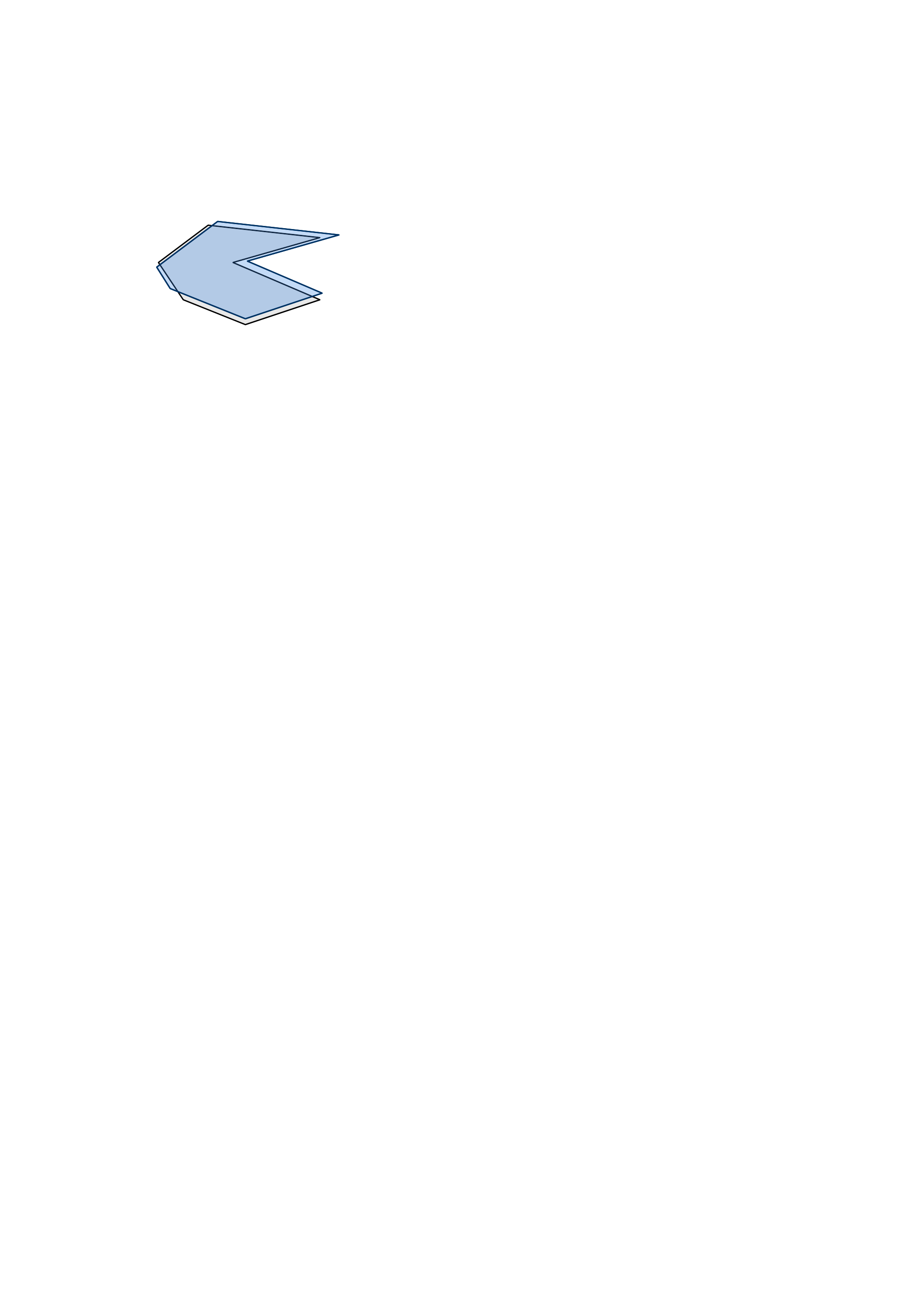}    
            \caption{A random \EdgePert/. 
            Note that all angles of edges are preserved.}
            \label{fig:Edge-Perturbation}
        \end{subfigure}
        
        \caption{Overview, over various models, how a polygon can be perturbed.}
        \label{fig:Overview-Perturbation-Models}
    \end{figure}
    \paragraph{\VertexPert/}
    Given a polygon $P$ on $n$ vertices $v_1,\ldots,v_n$ and a vector $u = (u_1,\ldots,u_{2n})  \in \R^{2n}$,
    we denote by $P_u$ the polygon with vertices $v_i'= v_i + (u_{2i-1},u_{2i})$.
    We say that $u\in \R^{2n}$ represents a {\em \VertexPert/} of magnitude
    $\delta$, if for every $i$ holds that $u_{2i-1}^2 + u_{2i}^2 \leq \delta^2$.
    See Figure~\ref{fig:Vertex-Perturbation} for an illustration.
    We define the corresponding probability space as 
    $\Omega_\delta = \disk(\delta)^n$.
    (We define $\disk(\delta)= \{ x\in \R^2 : x_1^2 + x_2^2 \leq \delta^2 \}$.)
    
    \paragraph{\MinkInfl/}
    Given a polygon $P$ on $n$ vertices $v_1,\ldots,v_n$ and a number $t\in \R$,
    we denote by $P_t = P \oplus \disk(t) \subseteq \R^2$ the 
    \MinkInfl/ of $P$ by $t$, see Figure~\ref{fig:Minkowski-Inflation}. 
    We say the magnitude of the inflation is at most $\delta$ in case that $t\leq \delta$.
    We define the corresponding probability space as 
    $\Omega_\delta = [0,\delta]$.
    Recall that given two sets $A,B \subset \R^2$, the Minkowski-sum
    is defined as $A\oplus B = \{x+y\in \R^2 : x\in A \textrm{ and } y\in B\}$.

\paragraph{\EdgeInfl/}
    Let $P$ and $P'$ be polygons on the same number of vertices.
    We say edge $e$ is {\em shifted} in polygon $P'$, if there is a 
    corresponding edge $e'$ in $P'$ and $e$ and $e'$ are parallel.
    The shift has {\em magnitude $t$}, if the distance of the supporting lines of 
    $e$ and $e'$ is $t$.
    We distinguish between a shift to the {\em outside} and to
    the {\em inside} in the obvious way.
    We say $P_t$ is an {\em \EdgeInfl/} of $P$ if every edge of
    $P$ is shifted by $t$ to the outside in $P_t$.
    See Figure~\ref{fig:Edge-Inflation}.
    We say the magnitude of the inflation is at most $\delta$ in case that $t\leq \delta$.
    We define the corresponding probability space as 
    $\Omega_\delta = [0,\delta]$.
    
    We also define a {\em discrete probability space}
    $\Omega_{\delta,q} = \{\frac{i\delta}{q}: 0\leq i \leq q-1\}\subseteq [0,\delta]$.
    Here $q\in \N$ indicates the \emph{granularity} of the perturbation.
    Furthermore, we assume $\delta \in \Q$.
    Consider the example of the triangle $\Delta$ on the vertices $(0,0), (1,0), (0,1)$.
    Note that for any rational number $r\in \Q$ holds that the
    \EdgeInfl/ $\Delta_r$ does not have rational coordinates.
    However, it is easy to give an explicit description of
    the coordinates, using radicals, i.e., square-roots.

\paragraph{\EdgePert/}
    Given a polygon $P$ and  $(t_1,\ldots,t_n) = v\in [-\delta,\delta]^n = \Omega_\delta$,
    we define the \EdgePert/ $P_v$ by 
    shifting the $i$-th edge of the polygon $P$
    by $t_i$, either inside or outside.
    (Depending on the sign of $t_i$.)
    See Figure~\ref{fig:Edge-Perturbation} for an illustration.

    \paragraph{Pointedness}
    Given a polygon $P$ we define the \emph{pointedness}
    $\beta = \beta(P)$  as follows: 
    Let $\alpha$ be the smallest interior or exterior angle of $P$,
    then $\beta = \alpha/8$. 
    We assume that the interior or exterior angle does not change by more
    than a factor of $2$ by any perturbation considered in this paper. 
    This can be achieved by having the perturbation sufficiently small.
    As a consequence of this assumption every interior and exterior
    angle will be always at least $4\beta$ for any \VertexPert/
    that we consider.

\paragraph{The \NaiveAlgo/}
    The \NaiveAlgo/ is guessing
    non-deterministically those coordinates, with a minimum 
    number of bits needed to describe an optimal guard set.
    Thereafter the \NaiveAlgo/ checks if those guessed
    guards are indeed guarding the entire polygon.
    The \NaiveAlgo/ will fail in case that
    there is no way to guard the given polygon optimally
    by guards whose coordinates can be described by rational numbers.
    In that case, we say that the running time is infinite.
    
\paragraph{Assumptions and Model of Computation}
    In this paper we are working with two different
    models of computation, which is a consequence of
    the model of perturbation that we are using.
    
    In the first model, we are using a continuous model
    of perturbation. In that case, the polygon is 
    described by real numbers. Consequently, we assume
    that we can do computations on the \realRAM/.
    It is crucial however that
    the non-deterministic part of the computation cannot guess 
    real numbers, but only rational numbers. 
    We measure the non-deterministic time of the algorithm
    in terms of the number of bits that are needed to represent
    the rational numbers that are guessed by the algorithm.
    
    In the second model, we are using a discrete model
    of perturbation. 
    In this setting, we are assuming that the vertices
    of the input polygon can be described by rational numbers.
    However, this does not lead always
    to rational vertices of the perturbed
    polygon, as we pointed out above. 
    Nevertheless, as we will argue, 
    we can do all basic operations on a \wordRAM/.
    
    Otherwise we are not making any assumptions
    on the polygons themselves. In particular, the polygon may have holes. 
    However, note that if the perturbation is
    too large in comparison to the polygon, then the 
    perturbed polygon may not be well-defined.
    This does not cause any serious issues since \SmoothAna/ usually applies in
    settings where the perturbation is thought of as 
    relatively small. So without defining this formally,
    we  will throughout the entire paper assume that the perturbation is
    sufficiently small, such that the perturbation defines a polygon.
    In case of \MinkInfl/, the output region is \emph{not} 
    a polygon anymore. 
    We can still reason about it, by making the assumption that
    we can do basic operations (like computing visibility
    regions and determine if a point is inside the regions, etc.).
    Although, the \MinkInfl/ is an important technical step
    for us, it is not the most important perturbation model.
    Thus, we will not discuss how realistic these assumptions are.

\subsection{Results}
\label{sec:Results}

Our main result states that typical instances do not
require irrational guards and the expected 
number of bits per guard is logarithmic.
The result establishes that algebraic methods
are {\em not} needed in typical instances.

\begin{restatable}[Bit-complexity]{theorem}{UpperBits}
\label{thm:MainBits}
    Let $P$ be a polygon, suppose
    $P \subset [0,L]^2$ for some positive integer $L$, and let 
     $\beta$ denote the pointedness of~$P$. 
    If $\delta>0$ is the magnitude of~a 
    
    \quad {\bf (a)} \MinkInfl/   \quad {\bf (b)} \EdgeInfl/
    \quad {\bf (c)} \EdgePert/ \quad {\bf (d)} \VertexPert/,
    
    \noindent then the  expected number of bits per guard to describe
    an optimal solution equals
    
    \quad {\bf (a),(b),(c)} $O\left( \log\left(\frac{nL}{\delta }\right)     \right)$  
    \quad  {\bf (d)} $O\left( \log\left(\frac{nL}{\delta \beta}\right) \right)$.
\end{restatable}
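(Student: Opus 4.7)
The plan is to reduce the bit-complexity per guard to a geometric \emph{slack} quantity and then to bound the expected logarithm of the reciprocal slack under each perturbation model.

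Let $G=\{g_1,\dots,g_k\}$ be an optimal guard set for the perturbed polygon $P'$. For each $i$ I define the \emph{movability region} $R_i\subseteq P'$ to consist of those points $g$ such that $(G\setminus\{g_i\})\cup\{g\}$ still covers $P'$, and I let the slack $\eta_i$ be the inradius of $R_i$ at $g_i$. If $\eta_i\geq \eta$, then $R_i$ contains an axis-aligned open square of side $\Omega(\eta)$, hence a rational point in $[0,L]^2$ with denominator $O(L/\eta)$. Replacing $g_i$ by this rational point preserves optimality, so $g_i$ admits a description with $O(\log(L/\eta_i))$ bits. The expected bit-complexity per guard is therefore at most $O(\E[\log(L/\eta)])$, and by the standard layer-cake identity
\[
\E[\log(L/\eta)] \;=\; \int_0^\infty \Pr[\log(L/\eta)>s]\,ds,
\]
it suffices to show that
\[
\Pr[\eta<\eta_0] \;\leq\; \poly(n)\cdot\frac{\eta_0}{\delta}
\qquad \text{(with an extra factor $1/\beta$ in model (d))}
\]
for all $\eta_0$ below some threshold.

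Next, I would characterize when $\eta_i$ is small. The boundary of $R_i$ is a union of algebraic arcs coming from \emph{critical constraints}: each arc records the condition that a specific witness point in $P'$ is on the verge of becoming invisible from $g_i$, and such a condition is governed either by a line through $g_i$ and a reflex vertex of $P'$ or by a polygon edge (cf.\ the critical-witness machinery used in~\cite{ARTETR}). A small slack $\eta_i$ forces either two such critical lines through $g_i$ to meet at a very shallow angle, or $g_i$ to sit extremely close to the intersection of a critical line with a polygon edge; both are algebraic near-degeneracies of $P'$ controlled by a bounded number of perturbation parameters. For the Minkowski, edge-inflation, and edge-perturbation models, each such near-degeneracy translates to the statement that a scalar perturbation parameter (the single shift $t$, or one of the per-edge shifts $t_i$) lies in an interval of length $O(\eta_0)$, which has probability $O(\eta_0/\delta)$; a union bound over the $\poly(n)$ combinatorially-possible critical configurations yields the required inequality. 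For vertex perturbation, a critical line depends on two vertex positions; fixing all but one coordinate and integrating in the perpendicular direction gives probability $O(\eta_0/(\delta\beta))$ per event, the factor $\beta$ arising because the angle at which the critical line cuts through the vertex disk is bounded below by the interior/exterior angle at the vertex, which may be as small as $\beta$.

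The main obstacle is the case analysis in the middle step: showing that only $\poly(n)$ configurations need be considered and that each is captured by a low-dimensional perturbation condition of Lebesgue measure $O(\eta_0)$ in the appropriate coordinate. Since the optimal guard set depends on the perturbation, the critical configurations cannot be fixed in advance; one must argue that although $G$ is a complicated random object, the \emph{set} of potentially critical constraint systems is determined by the combinatorial structure of $P$, which is preserved by sufficiently small perturbations via the pointedness assumption, and then union-bound over this combinatorial set. In model (d) one must additionally handle that each vertex enters several critical lines, which requires a careful one-coordinate-at-a-time conditioning argument to avoid double counting dependencies among the $2n$ vertex-perturbation parameters.
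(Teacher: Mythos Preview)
Your strategy—define a slack $\eta$, bound $\Pr[\eta<\eta_0]$ linearly in $\eta_0$, then integrate via layer-cake—is reasonable in spirit, but the proposal has two real gaps and misses the much simpler route the paper actually takes.

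\textbf{Gap 1: simultaneous rounding.} Your slack $\eta_i$ is defined per guard, with the remaining guards held at their original (possibly irrational) positions. Replacing $g_i$ by a rational point in $R_i$ yields an optimal set with \emph{one} rational guard; to make all guards rational you must iterate, but after the first replacement the movability region $R_j$ of every other guard is computed against a different set and may have shrunk. You give no argument controlling this degradation, so the step ``$g_i$ admits a description with $O(\log(L/\eta_i))$ bits'' does not combine into a bound for an entire rational optimal set.

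\textbf{Gap 2: the union bound.} The bound over ``$\poly(n)$ combinatorially-possible critical configurations'' is the heart of your argument and you explicitly leave it as an obstacle. Since the optimal guard set $G$ is itself a function of the perturbation, it is not at all clear that a polynomial-size family of constraint systems, fixed in advance from the combinatorics of $P$, captures every near-degeneracy that can pin down some guard in some optimal $G$.

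The paper bypasses both issues with a single monotonicity observation. If $P_t$ is the \MinkInfl/ or \EdgeInfl/ of $P$ by $t$, then \emph{every} optimal guard set for $P$ can be rounded, simultaneously for all guards, to the grid $w\Z^2$ with $w=\sqrt{2}\,t$, and the rounded set still guards $P_t$ (Lemmas~\ref{lem:Fixed-Minkowski-Inflation} and~\ref{lem:Fixed-Edge-Inflation}); the proof is a two-line convexity argument using that the Minkowski sum of a segment with $\disk(t)$ is convex and lies in $P_t$. In particular $|OPT(P_s)|$ is nonincreasing in $s$, so $[0,\delta]$ splits into at most $n$ intervals on which the optimum is constant, and on the $i$-th interval the grid width needed is $\sqrt{2}(s-t_{i-1})$. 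Integrating $\log(L/(s-t_{i-1}))$ over each interval and using concavity of $x\log(1/x)$ gives (a) and (b) directly (Lemma~\ref{lem:ExpectedBitInflations}). For (c) and (d) there is no analysis of critical configurations at all: the paper rewrites the perturbation space so that one marginal coordinate \emph{is} an \EdgeInfl/—for (c) by writing $v=v'+s\mathbf{1}$ with $s$ uniform on an interval of length $2\delta-t$, for (d) by a measure-preserving bijection sending each vertex perturbation to a maximally deflated one plus an inflation (Lemmas~\ref{lem:Decomposition-Inflation}--\ref{lem:Averaging}), with Lemma~\ref{lem:pointedness} supplying the factor $1/\beta$—and then applies the inflation bound to the inner integral.

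So the paper's argument is both complete and essentially elementary; your route might be salvageable, but closing the two gaps above would require substantially more machinery than the problem actually needs.
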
 

As a simple corollary of the proof, 
we get that a fine grid of expected 
width $w = 2^{O(\log(nL/\delta))} = (nL/\delta)^{O(1)}$
will contain an optimal guarding set.
This may appear 
at first sight as a {\em candidate set} of polynomial
size, however recall that the vertices are given
in binary and thus $L$ may be exponential in 
the input size.

Theorem~\ref{thm:MainBits} leads immediately to
a non-deterministic polynomial time algorithm for the
\AGP/ in the \SmoothAna/ model.
However, we still need the \realRAM/ to test
if the guard positions are correct.
\begin{restatable}[Expected \NP/-time]{theorem}{Upper}
\label{thm:MainUpper}
    Let $P$ be a polygon,
    suppose $P \subset [0,L]^2$ for some positive integer $L$, and let 
     $\beta$ denote the pointedness of $P$. 
    If $\delta>0$ is the magnitude of~a 
    
    \quad {\bf (a)} \MinkInfl/   \quad {\bf (b)} \EdgeInfl/
    \quad {\bf (c)} \EdgePert/ \quad {\bf (d)} \VertexPert/,
    
    \noindent then the \NaiveAlgo/ runs in expected
    
    \quad {\bf (a),(b),(c)} $O\left(  n \log\left(\frac{nL}{\delta }\right)     \right)$  
    \quad  {\bf (d)} $O\left(  n \log\left(\frac{nL}{\delta \beta}\right) \right)$
    
    non-deterministic time.
    Furthermore, the algorithm 
    takes an additional $O(n^4)$ deterministic time
    on a \realRAM/.
\end{restatable}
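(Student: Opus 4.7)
The plan is to bootstrap from Theorem~\ref{thm:MainBits} and append a purely deterministic verification routine. For the non-deterministic portion, note that by Chv\'atal's art gallery theorem the size of an optimal guard set is at most $\lfloor n/3 \rfloor$, so there are $k = O(n)$ guards to guess. Theorem~\ref{thm:MainBits} already gives an $O(\log(nL/\delta))$ (resp. $O(\log(nL/(\delta\beta)))$) expected bit budget \emph{per guard} for an optimal description; summing by linearity of expectation yields the stated $O(n\log(\cdot))$ total expected non-deterministic time. The non-deterministic witness is simply a vector of rational coordinates of that expected bit length.

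For the real-RAM verification I would proceed in three steps. First, for each guessed guard $g_i$ compute its visibility region $V_i$ inside the (possibly perturbed) polygon $P_\omega$; for the polygonal perturbation models (b)--(d) standard algorithms produce a polygon of complexity $O(n)$ in $O(n\log n)$ time even with holes, and for the \MinkInfl/ case (a) we invoke the assumed real-RAM primitives that handle circular arcs. Second, construct the overlay of $\partial P_\omega$ together with $V_1, \dots, V_k$ by a Bentley--Ottmann sweep; with $k=O(n)$ regions of complexity $O(n)$ each, the overlay has combinatorial complexity $O(n^4)$ and is computed in $O(n^4)$ time. Third, traverse the faces of the overlay and accept iff every face lying inside $P_\omega$ is contained in at least one $V_i$, which is constant work per face and hence $O(n^4)$ overall. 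Since the visibility region of a point in $P_\omega$ is constant on each face of this overlay, this check is sound and complete.

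The only genuine subtlety—and the reason the real RAM cannot be avoided here—is that under the continuous perturbation models the vertices of $P_\omega$ are in general irrational (for \EdgeInfl/ and \EdgePert/ they involve square roots, and under \MinkInfl/ the boundary itself is not polygonal). All sweep-line and overlay predicates reduce to sign evaluations of low-degree polynomial expressions in the coordinates, which by assumption cost $O(1)$ on the real RAM, so the asymptotic analysis is unaffected. The only obstacle worth highlighting is this: one must verify that the guards guessed non-deterministically with the bit budget of Theorem~\ref{thm:MainBits} are themselves rational, while the polygon they guard is not; since visibility and containment are expressible by polynomial inequalities in the real coordinates of $P_\omega$ and the rational coordinates of the $g_i$, the real-RAM verification handles this uniformly. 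Combining the two bounds yields the theorem.
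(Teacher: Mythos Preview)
Your proposal is correct and follows essentially the same two-step approach as the paper: multiply the per-guard bit bound of Theorem~\ref{thm:MainBits} by the trivial $O(n)$ bound on the number of guards, then verify deterministically by computing all visibility polygons and combining them with a sweep. Two small remarks: Chv\'atal's $\lfloor n/3\rfloor$ bound applies only to simple polygons, while the paper explicitly allows holes, so you should just use the cruder bound $k\le n$ as the paper does; and the paper computes the \emph{union} of the $k$ visibility polygons and checks that it coincides with $P$ (Lemma~\ref{lem:CheckGuard4}), rather than building the full overlay and testing each face, but both routes yield the same $O(k^2n^2)=O(n^4)$ bound.
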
 

To avoid the \realRAM/, we switch to a discrete model
of perturbation and focus solely on the \EdgeInfl/.
Note that we know that some instances of the \AGP/
require irrational guards and the \NaiveAlgo/ 
would have an infinite running time. 
Thus we cannot expect to acquire a finite expected
running time over a finite probability space.

\begin{restatable}[\NP/-time With High Probability]{theorem}{UpperProbability}
\label{thm:MainHighProb}
    Let $P$ be 
    a polygon and suppose
    $P \subset [0,L]^2$ for some positive integer $L$.
    If $\delta>0$ is the magnitude of~a {\bf discrete} \EdgeInfl/
    with granularity $q$ sufficiently large
    ($q>\frac{2n}{p}$ is sufficient),
    then the \NaiveAlgo/ runs with probability
    $1-p$
    in $O(n\log \frac{L n}{\delta p})$ 
    non-deterministic time
    and some additional polynomial deterministic time
    on an ordinary \TuringMachine/.
\end{restatable}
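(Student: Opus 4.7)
My plan is to prove the theorem in two stages. First, I will strengthen the expectation bound of Theorem~\ref{thm:MainBits}(b) to a tail inequality that transfers to the discrete distribution $\Omega_{\delta,q}$. Second, I will show that the verification step of the \NaiveAlgo/ admits a deterministic polynomial-time implementation on an ordinary \TuringMachine/.

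For the first stage, I expect that the proof of Theorem~\ref{thm:MainBits}(b) actually yields an exponential tail of the form $\Pr_{t\in[0,\delta]}\bigl[\text{bits per guard}\geq \log(nL/\delta)+s\bigr] = O(2^{-s})$, since integrating this tail reproduces the stated expectation. The geometric input is that the ``bad'' set of parameters $t$ forcing high bit-complexity is contained in a small neighborhood of finitely many algebraic varieties whose Lebesgue measure decays exponentially in the bit budget. Because $q > 2n/p$, the error in passing from the uniform measure on $[0,\delta]$ to the uniform measure on $\Omega_{\delta,q}$ is $O(1/q) = o(p/n)$, so the discrete tail is also $O(2^{-s} + p/n)$. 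Taking $s = \Theta(\log(n/p))$ and a union bound over the $O(n)$ guards yields bit complexity $O(\log(nL/(\delta p)))$ per guard with probability at least $1-p$, which matches the claimed non-deterministic time.

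For the second stage, the difficulty is that the inflated polygon has irrational vertices: if edge $i$ of $P$ has rational normal $n_i=(a_i,b_i)$ and rational constant $c_i$, then the corresponding edge of $P_t$ lies on the line $n_i\cdot x = c_i + t\sqrt{a_i^2+b_i^2}$. The central observation is that every geometric predicate appearing in a visibility check can be turned into a rational predicate by a constant number of squarings. For example, the test ``a rational point $p$ lies on the interior side of inflated edge $i$'' is immediate when $n_i\cdot p - c_i \leq 0$, and is otherwise equivalent to the rational inequality $(n_i\cdot p - c_i)^2 \leq t^2(a_i^2+b_i^2)$. Analogous squarings resolve line-line intersections and visibility-segment crossings; predicates coupling two inflated edges require a second round of squaring but remain polynomial-sized in the input. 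Since verifying that a guard set covers $P_t$ reduces to polynomially many such predicates (via, for instance, the arrangement of the visibility polygons of the guards), the entire check runs in deterministic polynomial time on a \TuringMachine/.

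The main obstacle is the second stage, specifically the bookkeeping required to confirm that every predicate arising in the verification admits a rational reformulation and that the bit-size blowup from squaring remains polynomial. The first stage is essentially a quantitative refinement of Theorem~\ref{thm:MainBits}(b): provided the tail form of that theorem is available (which I believe drops out of its proof naturally), the discretization step and the union bound are routine.
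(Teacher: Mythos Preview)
Your Stage~1 has a genuine gap in the discretization step. You write that ``the error in passing from the uniform measure on $[0,\delta]$ to the uniform measure on $\Omega_{\delta,q}$ is $O(1/q)$,'' but this is false for arbitrary sets: a set of Lebesgue measure zero can contain every point of $\Omega_{\delta,q}$. What makes the transfer work is a structural fact you never state, namely that the ``bad'' set of parameters is a union of at most $n$ intervals. The paper obtains this directly from the monotonicity of $s\mapsto|OPT(P_s)|$: there are breakpoints $0=t_0<t_1<\dots<t_\ell=\delta$ with $\ell\le n$, and the bad set is $\bigcup_i[t_{i-1},t_{i-1}+\tfrac{\delta p}{2n})$. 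A union of $\ell$ intervals can capture at most $\ell$ discrete points beyond its measure-proportional share, giving a discrepancy of $O(n/q)$, not $O(1/q)$; with $q>2n/p$ this is still $<p/2$, so the argument survives, but only once you have supplied this structure. Your vague reference to ``neighborhoods of finitely many algebraic varieties'' does not provide the needed bound on the number of intervals. Incidentally, the union bound over guards is unnecessary: all guards are placed on the \emph{same} grid, so a single event (grid width suffices) covers them all.

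The paper's route for Stage~1 is thus more direct than yours: it never passes through a tail inequality on bit-complexity, but simply writes down the bad set explicitly and counts how many discrete points it can contain.

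For Stage~2 your idea is essentially the same as the paper's, just phrased differently. The paper observes that each vertex of $P_t$ lies in a field $\Q[a,b]$ generated by two square roots (the edge lengths), that every geometric object computed by the verification algorithm of Lemma~\ref{lem:CheckGuard4} is determined by a \emph{constant} number of such vertices and rational guards, and that the resulting sign tests are sum-of-square-roots problems with $O(1)$ radicals, which are known to be polynomial-time decidable~\cite{qian2006much}. Your repeated-squaring proposal is an explicit implementation of the same fact; it works, but you should make clear that the number of distinct radicals in any single predicate is bounded by a constant, since with $k$ radicals the elimination-by-squaring blows up exponentially in $k$.
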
 

    Note that the dependence on $n,L,\delta$ and $\beta$ are
    in each case logarithmic. This is unusual compared
    to other running times derived in \SmoothAna/.
    Usually the dependence is $(1/\delta)^c$, where $c$
    is some constant, which is not always very small.
    The dependence on $1/\delta$ can be interpreted as how
    fragile the hard instances are to perturbations, 
    and here, a logarithmic dependence indicate that they
    are exponentially more fragile than in other settings where \SmoothAna/ has been applied.

    More important than the analysis itself, 
    is the fact that 
    this is the first time any exact algorithm, without
    using algebraic methods, could be shown to work correctly
    in a generally accepted theoretical model.
    We hope that our analysis will inspire researchers to
    find ways to show that also other (more practical) 
    algorithms are correct
    in the \SmoothAna/ model or a different model.
    
    If we choose $p$ from Theorem~\ref{thm:MainHighProb} 
    sufficiently small, only a small fraction of 
    the instances cannot be solved by the \NaiveAlgo/.
    If we use algebraic methods for those instances,
    we easily get an algorithm that runs in expected 
    non-deterministic polynomial time.
    \begin{restatable}[\NP/-time]{theorem}{NPTime}
        \label{thm:NPtime}
    Let $P$ be a polygon and suppose
    $P \subset [0,L]^2$ for some positive integer $L = n^{O(1)}$.
    If $\delta>0$ is the magnitude of~a {\bf discrete} \EdgeInfl/
    with granularity $q$ sufficiently large
    ($q>n^{\Omega(n)}$ is sufficient),
    then there is an algorithm that 
    runs in expected non-deterministic polynomial time
    on an ordinary \TuringMachine/.
    \end{restatable}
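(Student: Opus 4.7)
The plan is to combine the high-probability bound from Theorem~\ref{thm:MainHighProb} with the worst-case algebraic algorithm of Sharir to obtain an expected polynomial-time non-deterministic algorithm. The algorithm is a non-deterministic \TuringMachine/ whose computation tree has two principal branches: on one branch it runs the \NaiveAlgo/, and on the other it runs Sharir's algebraic algorithm for the \AGP/ (viewed as a deterministic procedure embedded in the non-deterministic machine, or equivalently simulated by a non-deterministic machine that guesses nothing). The non-deterministic running time on a given perturbed instance is the length of the shortest accepting branch, so it is dominated by whichever of the two succeeds faster.

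First I would choose the failure probability parameter in Theorem~\ref{thm:MainHighProb} to be $p = n^{-cn}$ for a sufficiently large absolute constant $c$. The hypothesis $q > n^{\Omega(n)}$ of the present theorem is exactly what guarantees $q > 2n/p$, so Theorem~\ref{thm:MainHighProb} applies. Hence, with probability at least $1-p$ over the discrete \EdgeInfl/, the \NaiveAlgo/-branch halts in non-deterministic time $O\!\left(n \log \frac{Ln}{\delta p}\right) = O(n^2 \log n + n \log(L/\delta))$, which under the assumption $L = n^{O(1)}$ is $\poly(n)$.

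Next I would bound the contribution of the remaining probability-$p$ event. On that event, I fall back on Sharir's algorithm, which solves the \AGP/ in time $n^{O(k)} \leq n^{O(n)} = 2^{O(n \log n)}$ in the number of arithmetic operations; since the input bit-complexity after a discrete \EdgeInfl/ of granularity $q = n^{O(n)}$ over a polygon in $[0,L]^2$ with $L = n^{O(1)}$ is still $\poly(n)$, the actual running time remains $2^{\poly(n)}$, and in any case is bounded by some explicit $n^{O(n)}$. For $c$ chosen large enough, the quantity $p \cdot n^{O(n)} = n^{-cn} \cdot n^{O(n)}$ is $O(1)$. Putting the two branches together, the expected non-deterministic time is at most
\[
(1-p)\cdot O(n^2\log n + n\log(L/\delta)) \; + \; p\cdot n^{O(n)} \;=\; \poly(n),
\]
which is the bound claimed.

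The main obstacle, and the only place that constrains the theorem's hypotheses, is calibrating $p$ so that the deterministic worst-case cost of Sharir's algorithm in the rare bad event is still absorbed into a polynomial expectation. This calibration forces $p$ to be as small as $n^{-\Omega(n)}$, which through the relation $q > 2n/p$ in Theorem~\ref{thm:MainHighProb} is exactly where the hypothesis $q > n^{\Omega(n)}$ comes from; any weaker granularity would leave a non-negligible probability mass on which the $2^{\Theta(n\log n)}$ algebraic cost dominates the expectation. A small secondary subtlety is that the \NaiveAlgo/ may have infinite running time on irrational-guard instances, but because this merely means the \NaiveAlgo/-branch never accepts on those instances, the non-deterministic time on the combined machine is still governed by the Sharir-branch, so the expectation remains well-defined and finite.
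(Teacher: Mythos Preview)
Your proposal is correct and follows essentially the same approach as the paper: set $p = n^{-cn}$, invoke Theorem~\ref{thm:MainHighProb} for the high-probability event, fall back on the $n^{O(n)}$ algebraic algorithm in the remaining $p$-fraction of cases, and choose $c$ large enough that $p \cdot n^{O(n)} = O(1)$. Your two-branch non-deterministic framing (and the remark about the \NaiveAlgo/ branch simply not accepting on irrational-guard instances) is slightly more explicit than the paper's presentation, but the argument is the same.
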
 

    Note that although \SmoothAna/ is a fairly involved
    concept, the proofs are relatively simple. 

    It is tempting to think that one could improve the result
    by scaling the polygon such that it fits into a unit square.
    Note that in this case one also has to scale
    down the perturbation by the same magnitude and one obtains the same 
    result.
    
    As a corollary, we can see that a fine grid contains
    an optimal solution in the \SmoothAna/ setting.
    Efrat and Har-Peled gave an approximation algorithm 
    to a variant where guards are restricted to a grid~\cite{EfratH06}.
    We repeat their theorem, reformulated and simplified slightly.
    Let $OPT(P)$ be an optimal guarding set of the polygon $P$, and let
    $OPT(P, C)$ denote the optimal way to guard the polygon $P$,
    under the restriction that all guards need to lie 
    on points in $C$.
    (We will only consider sets $C$, which are guarding $P$.)
    We say a polygon is {\em simple}
    if it does not contain any hole.
    \begin{thmNoNumber}[Efrat \& Har-Peled~\cite{EfratH06}]
        Given a simple polygon $P$ with $n$ vertices, one can spread
        a grid $\Gamma$ inside $P$, and compute a guard set of size
        $O( |OPT(P,\Gamma) \cdot \log |OPT(P,\Gamma) |)$. 
        The expected running time of the algorithm is
        $O ( n^3 \log^2 n  \log^2 \Delta.)$ , where $\Delta$ is 
        the ratio between the diameter of the polygon and the grid width.
        The algorithm runs on a \realRAM/.
    \end{thmNoNumber}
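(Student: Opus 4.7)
The plan is to reduce the covering problem to geometric set cover on a finite ground set and then apply an $\varepsilon$-net based approximation framework. First I would lay down the grid $\Gamma$ of width $w$ inside $P$, giving a candidate guard set $C = P \cap \Gamma$ of cardinality $O(\Delta^2)$. For every $g \in C$ I compute the visibility polygon $\vis(g)$; each has complexity $O(n)$ and can be produced in $O(n \log n)$ time by standard rotational-sweep algorithms. The union of these visibility polygons is the universe that must be covered, and the set system we wish to approximate is $\mathcal{F} = \{\vis(g) : g \in C\}$, where a ``good'' solution uses $O(|OPT(P,\Gamma)|)$ sets.

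Second, I would finitize the covering target. Overlay the boundaries of all $\vis(g)$ with the polygon boundary; this creates an arrangement with polynomially many cells. Picking one witness point per cell replaces ``cover all of $P$'' by ``hit a finite witness set $W$,'' producing an equivalent set-cover instance. The size of $W$ is polynomial in $n$ and $|C|$, and constructing it dominates one of the preprocessing costs.

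Third, I would invoke the Br\"onnimann--Goodrich framework: if the dual set system admits $\varepsilon$-nets of size $O((1/\varepsilon)\log(1/\varepsilon))$, their iterative reweighting procedure yields a cover of size $O(|OPT| \cdot \log |OPT|)$. For this I need the key geometric input that visibility regions inside a \emph{simple} polygon have bounded VC-dimension, and therefore admit small $\varepsilon$-nets via Haussler--Welzl sampling. The algorithm repeatedly doubles a guessed optimum value, solves the associated LP implicitly through multiplicative weights on the witness points, and draws random $\varepsilon$-nets from $C$; each round certifies either a feasible cover or a violated witness, and the guess is updated in $O(\log |OPT|)$ outer iterations.

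Fourth, I would assemble the running-time bound. Visibility polygon computation contributes $O(|C| \cdot n \log n)$, constructing the witness arrangement contributes another polynomial factor, and each Br\"onnimann--Goodrich round costs $O(\mathrm{poly}(n,\log \Delta))$ with the $\varepsilon$-net sample of size $O((1/\varepsilon)\log(1/\varepsilon))$; combined with the $O(\log \Delta)$ outer doubling loop one can tune the parameters to reach $O(n^3 \log^2 n \log^2 \Delta)$ expected time, with the expectation coming from the randomized $\varepsilon$-net draws. The main obstacle I anticipate is the geometric VC/$\varepsilon$-net bound for visibility regions in simple polygons: showing that this family is ``tame enough'' (for example by exhibiting a shallow-cell or pseudo-disk-like structure for visibility) is the crucial nontrivial ingredient, whereas the reduction to set cover and the Br\"onnimann--Goodrich machinery are essentially black-box once that input is in hand. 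All arithmetic is on square roots of rationals arising from visibility edge intersections, which is why the algorithm is stated on the \realRAM/.
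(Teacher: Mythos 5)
First, note that the paper does not prove this statement at all: it is quoted (``reformulated and simplified slightly'') from Efrat and Har-Peled, so your reconstruction has to be measured against their original argument rather than anything in this paper. Your high-level machinery is the right one and matches theirs: reduce to geometric set cover with candidate guards on a grid, and run the Br\"onnimann--Goodrich iterative-reweighting scheme, whose applicability hinges on the fact that visibility regions in a \emph{simple} polygon form a range space of bounded VC-dimension (a fact due to Kalai and Matou\v{s}ek, sharpened by Valtr), hence admit $\varepsilon$-nets of size $O((1/\varepsilon)\log(1/\varepsilon))$ and give the $O(\log|OPT(P,\Gamma)|)$ approximation factor. That part of your plan is sound and correctly identifies the one genuinely geometric ingredient.

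The genuine gap is in your steps one, two and four: you propose to explicitly compute the visibility polygon of every grid point and to build the arrangement of all of them. The candidate set has size $\Theta(\Delta^2)$, so this costs time polynomial in $\Delta$, while the claimed bound $O(n^3\log^2 n\log^2\Delta)$ is only \emph{polylogarithmic} in $\Delta$ --- and in the application in this paper $\Delta\approx nL/\delta$ with $L$ possibly exponential in the input size, so the distinction is essential, not cosmetic. Efrat and Har-Peled never enumerate the grid. The reweighting is maintained implicitly: after $i$ rounds the weight of a grid point $g$ is $2^{k}$, where $k$ is the number of already-selected witness ranges containing $g$, so the entire weight function is encoded by the $O(|OPT|\log|OPT|)$ visibility polygons chosen so far, and weighted random sampling of a grid point is carried out by range counting and area computations in the arrangement of only those polygons. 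Likewise, the verification step (finding a witness point not yet seen) is performed on the arrangement of the $O(|OPT|\log|OPT|)$ visibility polygons of the currently sampled guards, which has complexity polynomial in $n$ (since $|OPT|\le n$), not on the arrangement of all $\Theta(\Delta^2)$ candidate visibility polygons, whose complexity you correctly estimate as roughly $\Delta^4 n^2$ and which is therefore unaffordable. Without this implicit treatment of the grid your algorithm is correct as a set-cover approximation but runs in time $\Omega(\Delta^2 n)$, which does not establish the stated bound.
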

    Note that the theorem does not assume that $|OPT(P,\Gamma)|$
    and $|OPT(P)|$ are related. Due to this gap, the algorithm
    was not known to be an approximation algorithm of the optimum.
    Bonnet and Miltzow filled this gap, as they showed that a fine grid 
    contains a constant factor
    approximation of the optimum, under some mild assumptions on the polygon.
    So in fact the algorithm of Efrat and Har-Peled provides an $O(\log |OPT(P)|)$-approximation 
    to the classical \AGP/~\cite{BonnetM17Approx}.
    Now the technical demanding proof by Bonnet and Miltzow can be
    replaced by a simple \SmoothAna/.
    Also our analysis yields better constants both
    in the approximation factor and the grid-width.
    Note that $\Delta \approx L/\delta$ in our notation.
    With our notation the new theorem yields:
    \begin{restatable}[Approximation-Algorithm]{theorem}{Approx}
    \label{thm:MainApprox}
        Let $P$ be a simple polygon, suppose
        $P \subset [0,L]^2$ for some positive integer $L$,
        and let $\delta$ denote the magnitude of an \EdgeInfl/.
        Then the  algorithm by Efrat and Har-Peled  
        runs in expected time $O( n^3 \log^2 n  \log^2 (n L/\delta))$, and gives a solution
        which is within a factor $O(\log |OPT(P_t)|)$ from the optimum number 
        of guards for $P_t$. The algorithm runs on a \realRAM/.
    \end{restatable}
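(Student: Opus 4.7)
The plan is to feed the grid supplied by the smoothed analysis bit-complexity bound into the Efrat--Har-Peled algorithm as a black box. Concretely, I would first invoke Theorem~\ref{thm:MainBits}(b) on the Edge-Inflation $P_t$: in expectation, an optimal guard set of $P_t$ can be described using $B = O(\log(nL/\delta))$ bits per coordinate. Equivalently, there is an axis-aligned grid $\Gamma \subset [0,L]^2$ with spacing $w = L\cdot 2^{-O(\log(nL/\delta))} = (\delta/(nL))^{O(1)}$ such that $\Gamma$ contains an optimal guard set, hence $|OPT(P_t,\Gamma)| = |OPT(P_t)|$.

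I would then run the Efrat--Har-Peled algorithm on $P_t$ with this grid $\Gamma$. The diameter-to-spacing ratio is bounded by $\Delta \leq L/w = (nL/\delta)^{O(1)}$, so $\log\Delta = O(\log(nL/\delta))$ and $\log^2\Delta = O(\log^2(nL/\delta))$. Substituting into the quoted running time $O(n^3 \log^2 n \, \log^2\Delta)$ gives the claimed $O(n^3 \log^2 n \, \log^2(nL/\delta))$ expected running time. The approximation guarantee is then automatic from the Efrat--Har-Peled statement: the algorithm outputs a guard set of size $O(|OPT(P_t,\Gamma)|\,\log|OPT(P_t,\Gamma)|) = O(|OPT(P_t)|\,\log|OPT(P_t)|)$, i.e.\ an $O(\log|OPT(P_t)|)$ approximation. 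Since $P$ is simple and the Edge-Inflation shifts every edge outward by the same small amount, $P_t$ remains simple for our range of $\delta$, so the hypotheses of Efrat--Har-Peled are met.

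The main obstacle, and the one place care is needed, is that the bit-complexity $B$ produced by Theorem~\ref{thm:MainBits} is a random variable, while the grid has to be fixed before the algorithm starts and the running time depends on $\log^2\Delta$, not on $\log\Delta$. I would resolve this in two steps. First, fix a deterministic grid of spacing $w^\ast = L\,(\delta/(nL))^{C}$ for a sufficiently large constant $C$; by Markov's inequality applied to the bit-complexity bound of Theorem~\ref{thm:MainBits}, with probability $1-o(1)$ the optimum lies on $\Gamma^\ast$, so conditioned on this event the desired approximation and running time hold deterministically. Second, on the complementary low-probability event, the algorithm still terminates in time polynomial in $n$ and $\log(L/w^\ast) = O(\log(nL/\delta))$, so it contributes only a lower-order term to the expectation. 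This bootstraps a bound on $\mathbb{E}[\log^2\Delta]$ from the bound on $\mathbb{E}[B]$ without needing a separate concentration inequality, and completes the proof.
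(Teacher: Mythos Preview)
Your black-box reduction to Theorem~\ref{thm:MainBits} does not close. The bit-complexity theorem only controls the \emph{first} moment $\E[B]$, whereas the Efrat--Har-Peled running time scales with $\log^2\Delta$, i.e.\ a \emph{second} moment. Your Markov step cannot bridge this: with a fixed constant $C$ you obtain $\Pr(B>C\cdot\E[B])\le 1/C$, which is $1-\Omega(1)$ success, not $1-o(1)$; pushing $C\to\infty$ to get $o(1)$ failure inflates $\log^2\Delta$ by $C^2$ and destroys the target bound. More importantly, the running time is not where your argument breaks --- with a fixed grid $\Gamma^\ast$ the running time is deterministically $O(n^3\log^2 n\,\log^2(nL/\delta))$. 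The real gap is the approximation guarantee: on the $\Omega(1/C)$-probability event that the optimum does \emph{not} lie on $\Gamma^\ast$, you have $|OPT(P_t,\Gamma^\ast)|>|OPT(P_t)|$ with no control on the ratio, so the claimed $O(\log|OPT(P_t)|)$ factor fails on a set of positive measure. The theorem asserts this factor for the perturbed instance $P_t$, not merely with high probability over~$t$.

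The paper avoids both problems by not going through Theorem~\ref{thm:MainBits} at all. It uses the monotonicity of $s\mapsto|OPT(P_s)|$ to partition $[0,\delta]$ into at most $n$ intervals $[t_{i-1},t_i)$ on which $|OPT|$ is constant, and for each realized $s$ takes the grid of width $\sqrt{2}(s-t_{i-1})$; by Lemma~\ref{lem:Fixed-Edge-Inflation} this grid always contains an optimal guard set, so the approximation factor holds for \emph{every} $s$. The expected running time is then obtained by integrating $\log^2\!\bigl(L/(s-t_{i-1})\bigr)$ directly over each interval and summing (using concavity of $x\log^2(1/x)$ to handle the worst split). In short, you need the instance-dependent grid and the direct $\log^2$ integration; the first-moment black box is not enough.
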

    
    Note that the expectation is here both about the 
    randomness used by the algorithm and the randomness
    coming from the \SmoothAna/.

    Our methods mainly rely on a simple analysis
    of \EdgeInfl/. It seems to be the case that any analysis
    that works out for \EdgeInfl/ is likely to also work for 
    other models of perturbation.
    In Section~\ref{sec:DiscussPertubations},
    we will discuss why it is (arguably) better
    to focus on \EdgeInfl/.
    Nevertheless, the relative simplicity of analyzing \EdgeInfl/ is 
    the main reason why we only consider \EdgeInfl/ in
    Theorem~~\ref{thm:MainHighProb}, 
    Theorem~\ref{thm:NPtime} and Theorem~\ref{thm:MainApprox}.

    \paragraph{Notation}
    We write
    $f(n) \leq_c g(n)$, to indicate
    $f(n) = O(g(n))$ or equivalently 
    $f(n) \leq  c g(n)$, for some large enough constant $c$. 

\section{Preliminaries}
\label{sec:Preliminaries}

In this section we establish some
general facts that will be needed throughout
the paper.

The key idea of the paper are some monotonicity properties of
\MinkInfl/ and \EdgeInfl/. 
Roughly speaking guarding can only get easier after inflations.

\begin{lemma}[Fixed \MinkInfl/]
\label{lem:Fixed-Minkowski-Inflation}
    Let $P$ be a polygon, $t>0$
    and $P_t$ its \MinkInfl/ by magnitude $t$. 
    Then $|OPT(P)|\geq |OPT(P_t, w \Z^2 )|$,
    for any $w\leq \sqrt{2}t$.
\end{lemma}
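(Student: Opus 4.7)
The plan is to start from an optimal guard set $G$ for $P$ and construct a guard set $G' \subset w\Z^2$ of the same (or smaller) size which guards $P_t$. Since $w \le \sqrt{2}\,t$, the half-diagonal of a grid cell satisfies $w/\sqrt{2} \le t$, so every point of $\R^2$ lies within Euclidean distance $t$ of some grid point of $w\Z^2$. In particular, for each $g \in G \subset P$ we may pick a grid point $g' \in w\Z^2$ with $\|g'-g\| \le t$. Then $g' \in g \oplus \disk(t) \subseteq P \oplus \disk(t) = P_t$, so $g'$ is a legal guard location in $P_t$.

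Next I would show that $G' := \{g' : g \in G\}$ guards all of $P_t$. Take any $p' \in P_t$. By definition of Minkowski sum, we can write $p' = p + u$ for some $p \in P$ and $\|u\| \le t$. Since $G$ guards $P$, some $g \in G$ sees $p$, i.e.\ the segment $\overline{g p} \subset P$. Write $g' = g + u'$ with $\|u'\| \le t$. For any $\lambda \in [0,1]$, the point on the segment $\overline{g' p'}$ at parameter $\lambda$ is
\[
(1-\lambda) g' + \lambda p' \;=\; \bigl((1-\lambda)g + \lambda p\bigr) \;+\; \bigl((1-\lambda)u' + \lambda u\bigr).
\]
The first summand lies in $\overline{g p} \subseteq P$, and the second summand has norm at most $(1-\lambda)\|u'\| + \lambda\|u\| \le t$. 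Therefore the entire segment $\overline{g' p'}$ is contained in $P \oplus \disk(t) = P_t$, so $g'$ sees $p'$ in $P_t$.

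Since $|G'| \le |G| = |OPT(P)|$ and $G'$ is a valid guard set for $P_t$ restricted to the grid $w\Z^2$, we conclude $|OPT(P_t, w\Z^2)| \le |G'| \le |OPT(P)|$, which is the claim.

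The only mildly delicate step is the visibility argument in the second paragraph; the convex-combination identity above makes it essentially automatic, because $P_t$ is by construction closed under the kind of bounded displacement that rounding $g$ to $g'$ and letting $p$ drift to $p'$ introduces. Everything else is immediate from the fact that $w \le \sqrt{2}\,t$ exactly matches the covering radius of the grid $w\Z^2$.
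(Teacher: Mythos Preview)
Your proof is correct and is essentially the same as the paper's. The paper phrases the visibility step by observing that the convex region $R = \overline{gp}\oplus\disk(t)\subseteq P_t$ contains both the rounded guard $g'$ and the target point $p'$; your explicit parametrization $(1-\lambda)g'+\lambda p' = ((1-\lambda)g+\lambda p)+((1-\lambda)u'+\lambda u)$ is exactly the computation that verifies this convexity claim.
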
 
\begin{proof}
    Given $OPT = OPT(P)$, we define a set $G\subseteq w \Z^2$ 
    of guards of size $|G| = |OPT|$, by rounding every point
    in $OPT$ to its closest grid point in $w \Z^2$.
    We will show that $G$ guards $P_t$. See to the left of 
    Figure~\ref{fig:Inflations-Proof} for an illustration.

    \begin{figure}[htbp]
    \centering
    \includegraphics{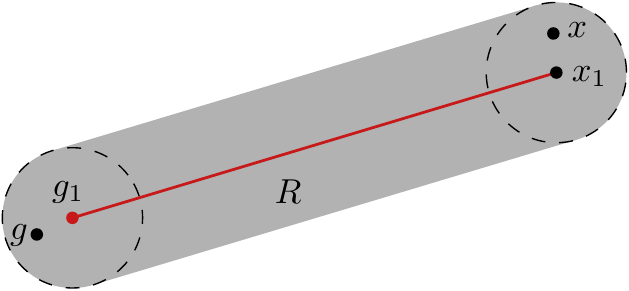}
    \hspace{1cm}
    \includegraphics{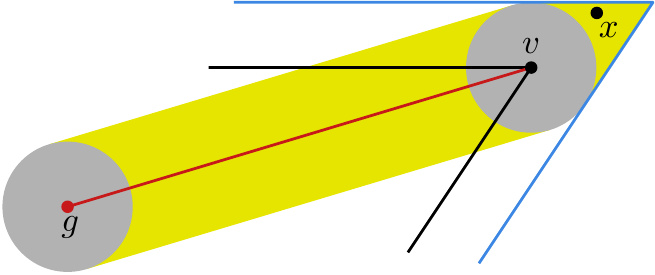}
    \caption{Left: The Region $R$ is convex, and contains a guard $g \in G$ and the point $x$.
    Thus $x$ is guarded by $g$. 
    Right: The Region $R'$ is easily seen to be convex.}
    \label{fig:Inflations-Proof}
    \end{figure}
   
    Let us fix some arbitrary point $x \in P_t$. 
    It is sufficient to show that $G$ guards $x$.
    By definition of $P_t$, there exists an $x_1\in P$
    and an $x_2 \in \disk(t)$ such that $x = x_1 + x_2$.
    Furthermore let $g_1$ be a guard of $OPT$ that guards
    $x_1$. 
    Consider the region $R = g_1x_1 \oplus \disk(t)$,
    i.e., the Minkowski-sum of the segment $g_1x_1$ with a disk of radius $t$.
    As the segment $g_1x_1$ is contained in $P$, 
    it holds that $R$ is contained in $P_t$.
    Also as both the segment and the disk are convex, 
    so is $R$. At last notice that $R$ contains a point $g\in G$,
    as every disk of radius $t$ contains a point of the 
    grid $w\Z^2$ with  $w =\sqrt{2}t$.
    As $R$ is convex, $g \in G$ guards $x$.  
\end{proof}

\begin{lemma}[Fixed Edge Inflation]
\label{lem:Fixed-Edge-Inflation}
    Let $P$ be a polygon with integer coordinates and $t>0$
    and $P_t$ the \EdgeInfl/ of $P$ by $t$. 
    Then $|OPT(P)|\geq |OPT(P_t, w \Z^2 )|$, for any $w\leq \sqrt{2}t$.
\end{lemma}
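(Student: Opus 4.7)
The plan is to mirror the proof of Lemma~\ref{lem:Fixed-Minkowski-Inflation} as closely as possible, exploiting the fact that the edge-inflation of $P$ extends at least as far as the Minkowski-inflation by the same magnitude at every point. First I would round each guard of $OPT(P)$ to its nearest point of $w\Z^2$ to obtain a candidate grid set $G$ with $|G|\le |OPT(P)|$ and every rounded guard $\tilde g_1$ satisfying $|\tilde g_1-g_1|\le w/\sqrt{2}\le t$. The lemma then reduces to showing that $G$ guards $P_t$.

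The geometric lever I would use is the containment $P\oplus\disk(t)\subseteq P_t$. Along every edge of $P$ the two inflations shift outward by the same perpendicular distance $t$; at reflex corners the two shifted edges meet at exactly the same intersection point in both inflations; at convex corners the Minkowski-inflation rounds with a circular arc of radius $t$ centered at the corner, whereas the edge-inflation continues the shifted edges until they meet at a pointed tip that lies strictly outside that arc. Consequently the only points of $P_t$ not already in $P\oplus\disk(t)$ are the ``pointed caps'' attached to the convex corners of $P$. For any $x\in P\oplus\disk(t)$ I can then copy the argument of Lemma~\ref{lem:Fixed-Minkowski-Inflation} almost verbatim: pick $x_1\in P$ with $x\in x_1\oplus\disk(t)$, take $g_1\in OPT(P)$ with $g_1 x_1\subseteq P$, and observe that $R'=g_1 x_1\oplus\disk(t)$ is convex, contains both $\tilde g_1$ and $x$, and satisfies $R'\subseteq P\oplus\disk(t)\subseteq P_t$, so the segment $\tilde g_1 x\subseteq R'\subseteq P_t$.

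The step I expect to be the main obstacle is the residual case $x\in P_t\setminus (P\oplus\disk(t))$, i.e.~when $x$ lies in a pointed cap at some convex corner $v$ of $P$. Here no point of $P$ is within Euclidean distance $t$ of $x$, so the disk-stadium argument above does not apply verbatim. My plan is to anchor at $v$ itself: $v\in P$ is guarded by some $g_1\in OPT(P)$, and the entire cap is visible from $v$ inside $P_t$ because the segment $vx$ first crosses the disk $v\oplus\disk(t)\subseteq P\oplus\disk(t)$ and then enters the cap, both of which are subsets of $P_t$. I would then replace $R'$ by a larger convex region containing $\tilde g_1$ and $x$---for instance the convex hull of the stadium $g_1 v\oplus\disk(t)$ together with the cap itself, both of which sit in the convex angular wedge at $v$ bounded by the two shifted edges---and verify by a local analysis at $v$ that this enlarged region still lies in $P_t$. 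Tightening this local containment, and in particular ruling out that some far-away edge of $P_t$ obstructs the straight segment $\tilde g_1 x$, is the delicate part of the proof I would need to finish.
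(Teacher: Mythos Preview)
Your plan matches the paper's proof almost exactly: round the guards, handle points of $P\oplus\disk(t)$ via the stadium argument, and treat each point $x$ in a convex-corner cap by anchoring at the vertex $v$ and the guard $g_1\in OPT(P)$ that sees $v$. The only missing ingredient is the observation that finishes the cap case with no further ``local analysis'' at all: the set
\[
R' \;=\; \bigl(g_1v \oplus \disk(t)\bigr)\ \cup\ \text{cap}
\]
is itself convex. Once you see this, $R'\subseteq P_t$ is immediate (the stadium lies in $P\oplus\disk(t)\subseteq P_t$ and the cap lies in $P_t$ by definition), and since $\tilde g_1\in\disk(g_1,t)\subseteq R'$ and $x\in\text{cap}\subseteq R'$, convexity gives $\tilde g_1 x\subseteq R'\subseteq P_t$. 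So there is no convex hull to control and no far-away edge to worry about.

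Why is $R'$ convex? Because $g_1$ sees $v$ inside $P$, the direction $g_1-v$ lies in the closed interior wedge at $v$. Hence the two shifted edges, which are tangent to $\disk(v,t)$ at points $T_1,T_2$, touch the disk on the half-circle that forms the boundary of the stadium at its $v$-end. Attaching the cap therefore replaces the sub-arc $T_1T_2$ of that boundary half-circle by the two tangent segments $T_1\,\text{tip}$ and $\text{tip}\,T_2$; the transition at $T_1$ and $T_2$ is tangential and the corner at the tip is convex, so the resulting region is convex. This is exactly what the paper asserts (``$R'$ is convex by construction'') with reference to its figure.
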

\begin{proof}
    We follow closely the proof of
    Lemma~\ref{lem:Fixed-Minkowski-Inflation}.
    See to the right of 
    Figure~\ref{fig:Inflations-Proof} for an illustration.
    
    Given $OPT = OPT(P)$, we define a set $G\subseteq w \Z^2$ 
    of guards of size $|G| = |OPT|$, by rounding every point
    in $OPT$ to its closest grid point in $w\Z^2$.
    We will show that $G$ guards $P_t$.
    Note that in an edge inflation by $t$,
    we get the same polygon as by a \MinkInfl/ 
    by $t$, except that we have to add
    some small regions at the convex corners, as illustrated
    in Figure~\ref{fig:Edge-Inflation}.
    We already know that $G$ guards the Minkowski
    $t$-inflation of $P$. So it remains to show that 
    $G$ guards those little extra regions, 
    as discussed above.

    Let us fix some arbitrary point $x \in P_t$
    inside one of those extra regions. 
    We will show that $G$ guards $x$.
    Let $v$ be the vertex according to the region
    that $x$ sits in.
    Furthermore let $g_1$ be a guard of $OPT$ that guards $v$. 
    Consider the region $R = g_1v \oplus \disk(0,t)$.
    We define $R'$ as the region $R$ together with
    the region that $x$ sits in.
    Obviously $x \in R'$ and also there exists
    a point of $G$ in $R$.
    It holds by construction that $R'$ is convex.
    This finishes the proof.
\end{proof}

\section{Approximation Algorithm}
\label{sec:Approx}

To showcase our technique, we will first show the 
correctness of the approximation algorithm
of Efrat and Har-Peled~\cite{EfratH02,EfratH06}.

\Approx* 

\begin{proof}
    Let us assume that there are some numbers 
    $0 = t_0< t_1<\ldots < t_\ell = \delta$
    such that for all $i$ and $s\in [t_{i-1},t_{i})$
    holds that $|OPT(P_s)|$ is constant.
    As $| OPT(P_s)| $ is 
    monotonically decreasing it holds that $\ell \leq n$.
    Note that, if we do an \EdgeInfl/ by $s\in [t_{i-1},t_{i})$,
    we know that a grid of width $w = \sqrt{2}(s-t_{i-1})$
    contains an optimal solution.
    And in this case the algorithm of 
    Efrat and Har-Peled~\cite{EfratH06}
    runs in $O(n^3 \log^2 n \log^2 (L/(s-t_{i-1})))$ time.
    We denote by $\E(T_{i})$ the expected running time
    for $s \in [t_{i-1},t_{i})$.
    We denote $\delta_i = t_{i} - t_{i-1}$.
    Using the definition of the smoothed expected running time
    we get 
    \[
        \E(T_{i}) \leq_c \frac{1}{\delta_i}\int_{s\in[t_{i-1},t_{i})} n^3 \log^2 n \log^2 \frac{L}{s-t_{i-1}} .
    \]
    This gives
    \[    
        \E(T_{i}) \leq_c \frac{1}{\delta_i}\int_{s\in[0,\delta_i)} n^3 \log^2 n \log^2 \frac{L}{s}
        \leq_c \frac{1}{\delta_i} \, 
        \delta_i \, n^3 \log^2 n \log^2 \frac{L}{\delta_i}
         =n^3 \log^2 n \log^2 \frac{L}{\delta_i}.
    \]
    Here, we solved the integral with a standard algebra system.
    Now, we are ready to compute the overall expected running time
    \[
        \E(T) = \frac{1}{\delta}\sum_{i=1,\ldots, \ell} \delta_i  \E(T_{i})
          \leq_c \frac{1}{\delta} \sum_{i=1,\ldots, \ell} \delta_i n^3 \log^2 n \log^2 \frac{L}{\delta_i}
          = \frac{n^3 \log^2 n}{\delta}  \sum_{i=1,\ldots, \ell} \delta_i  \log^2 \frac{L}{\delta_i}.
    \]
    As the function $x\log^2(1/x)$  is concave
    the maximum is attained,
    if $\delta_1=\ldots=\delta_\ell = \delta/\ell$.
    Thus we get
    \[
        \E(T) \leq_c \frac{n^3 \log^2 n}{\delta}   \ell \left[\delta/\ell  \log^2 \frac{L}{(\delta/\ell)}\right]
        = n^3 \log^2 n \log^2 \frac{L}{(\delta/\ell)}
        \leq   n^3 \log^2 n \log^2 \frac{L n}{\delta}. \qedhere
    \]
\end{proof}

\section{Expected Number of Bits}
\label{sec:Bits}
This section is devoted to show the following theorem.

\UpperBits*

We will start by considering inflations as 
they play a special
role for the other cases.

\begin{lemma}
    \label{lem:ExpectedBitInflations}
    Let $P$ be a polygon
    and $\delta$ denotes the magnitude
    of a \MinkInfl/ or \EdgeInfl/.
    We assume furthermore, that the polygon $P$ has $n$ vertices 
    and fits inside a square $[0,L]^2$.
    Then the expected number of bits to describe 
    the optimal guard placement equals
    \[O\left( \log(Ln/\delta)     \right),\]
    per guard.
\end{lemma}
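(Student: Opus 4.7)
The plan is to combine the fixed-inflation lemmas (Lemma~\ref{lem:Fixed-Minkowski-Inflation} and Lemma~\ref{lem:Fixed-Edge-Inflation}) with an averaging argument over the random inflation magnitude $s \in [0,\delta]$. Those lemmas guarantee that after inflating a polygon by $t$, there is an optimal guard set whose coordinates all lie on the grid $w\Z^2$ for any $w \leq \sqrt{2}t$. A grid point of $w\Z^2 \cap [0,L]^2$ is specified by two integer indices of magnitude at most $L/w$, so a single guard needs only $O(\log(L/w))$ bits. The entire task is to convert this worst-case grid-width bound into an expected bit-count.

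To do so, I would partition $[0,\delta]$ into maximal intervals $[t_{i-1},t_i)$ on each of which the integer-valued function $|OPT(P_s)|$ is constant. Because this quantity is monotonically non-increasing in $s$ (inflation only makes guarding easier) and is bounded above by $\lfloor n/3 \rfloor$, the number of intervals satisfies $\ell \leq n$. For $s \in [t_{i-1}, t_i)$, I would view $P_s$ as the inflation of $P_{t_{i-1}}$ by $\tau := s - t_{i-1}$; since $|OPT(P_{t_{i-1}})| = |OPT(P_s)|$ on this interval, the fixed-inflation lemma applied with starting region $P_{t_{i-1}}$ produces an optimal guard set of $P_s$ sitting on a grid of width $w = \sqrt{2}\tau$, and therefore representable with $O(\log(L/\tau))$ bits per guard.

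Taking the expectation over $s$, the bound becomes
\begin{equation*}
    \E[\text{bits per guard}] \;\leq_c\; \frac{1}{\delta}\sum_{i=1}^{\ell} \int_{0}^{\delta_i} \log\!\frac{L}{\tau}\,d\tau,
\end{equation*}
where $\delta_i := t_i - t_{i-1}$. Each integral evaluates to $O(\delta_i \log(L/\delta_i))$, and because $x \mapsto x\log(L/x)$ is concave on $(0,L]$, the resulting sum is maximized when all $\delta_i$ equal $\delta/\ell$, yielding a bound of $O(\log(L\ell/\delta)) \leq O(\log(Ln/\delta))$. This is essentially the same telescoping-plus-concavity argument already employed in the proof of Theorem~\ref{thm:MainApprox}, merely with a logarithmic integrand in place of a squared logarithm.

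The main obstacle I anticipate is being precise about what ``bits to describe'' means when the grid width $w = \sqrt{2}\tau$ is itself irrational, as happens for \EdgeInfl/ of a polygon with integer coordinates. The natural reading is that a guard is specified by a pair of integer indices together with the already-determined scale $w$, rather than by writing out a decimal expansion of a real number. A secondary technicality is that Lemma~\ref{lem:Fixed-Minkowski-Inflation} is stated with a polygon as input, whereas $P_{t_{i-1}}$ has circular-arc boundary in the Minkowski case; fortunately the proof only uses convexity of segments and disks, so it applies verbatim to the composite inflation $P_{t_{i-1}} \to P_s$. The breakpoints $t_i$ themselves depend on $P$ and are unknown to any algorithm, but since we only use the crude bound $\ell \leq n$ together with concavity, their precise locations never enter the final estimate.
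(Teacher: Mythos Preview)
Your proposal is correct and follows essentially the same route as the paper: partition $[0,\delta]$ at the breakpoints of $|OPT(P_s)|$, apply the fixed-inflation lemmas on each subinterval to obtain a grid of width $\sqrt{2}(s-t_{i-1})$, integrate the resulting $O(\log(L/\tau))$ bit bound, and finish via the concavity of $x\mapsto x\log(L/x)$ together with $\ell\le n$. The paper resolves your first obstacle by replacing $w$ with the smaller rational width $1/\lceil 1/w\rceil$ (which the fixed-inflation lemmas explicitly permit via the ``for any $w\le\sqrt{2}t$'' clause), and it leaves the second obstacle implicit exactly as you describe.
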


\begin{proof}
    Let us assume that there are some numbers 
    $0 = t_0< t_1<\ldots < t_\ell = \delta$
    such that for all $i$ and $s\in [t_{i-1},t_{i})$
    holds that $|OPT(P_s)|$ is constant.
    As $| OPT(P_s)| $ is 
    monotonically decreasing it holds that $\ell \leq n$.
    We denote by $\delta_i = t_i - t_{i-1}$.

    Note that if the perturbation happens to be $s\in [t_{i-1},t_{i}]$ then 
    a grid of width $w = \sqrt{2}(s-t_{i-1})$ contains an
    optimal solution, see Lemma~\ref{lem:Fixed-Minkowski-Inflation}
    and~\ref{lem:Fixed-Edge-Inflation}. 
    Then the number of bits to describe the solution equals
    $O(\log (L/w)) $ per guard. 
    To see this note that we can use 
    $b= \lceil 1/w \rceil$ as denominator of all
    coordinates and the numerators are 
    upper bounded by $\lceil L/w \rceil$.
    Thus $O(\log (L/w))$ bits suffice.
    Let us denote the number of bits 
    after a perturbation by $s$ as $B(s)$.
    We denote by $\E(B_i)$ the expected number
    of bits for $s\in [t_{i-1},t_{i})$.
    The expected number of bits $\E(B_i)$  can be calculated as 
      \[ 
      \E (B_i) = \frac{1}{\delta_i} \int_{s\in [t_{i-1}, t_i)} B(s)   
      \leq_c \frac{1}{\delta_i} \int_{s\in [t_{i-1}, t_i)} \log(L/(t_i-s))  
      = \frac{1}{\delta_i} \int_{s\in [0,\delta_i]} \log(L/s) .
      \]
    Using some computer algebra system, we get
     \[
      = \frac{1}{\delta_i} \delta_i\left(1 + \log(L/\delta_i)     \right)
      \leq_c   \log(L/\delta_i).
    \]
    We are now ready to compute $\E(B)$.
    \[
        \E(B) = \frac{1}{\delta}\sum_{i=1,\ldots, \ell} \delta_i  \E(B_{i})
        \leq_c \frac{1}{\delta}\sum_{i=1,\ldots, \ell} \delta_i  \log(L/\delta_i).
    \]
    As the function $x\log(1/x)$  is concave
    the maximum is attained,
    if $\delta_1=\ldots=\delta_\ell = \delta/\ell$.
    Thus we get
    \[
          \E(B) \leq_c \frac{1}{\delta}\sum_{i=1,\ldots, \ell} \delta/\ell  \log(L\ell/\delta)
          =   \log(L\ell/\delta)
          \leq_c \log(Ln/\delta). \qedhere
    \]
\end{proof}

Let us now turn to the \SmoothAna/ of \EdgePert/s.
The idea is that the set of all \EdgePert/s can be 
decomposed into
a combination of a new \EdgePert/ together with
a small \EdgeInfl/. 
As we know that \EdgeInfl/s behave nicely,
so will \EdgePert/s.

\begin{lemma}
    \label{lem:Expected-Edge-Perturbation-Time}
    Let $P$ be a polygon
    and let $\delta$ denote the magnitude
    of an \EdgePert/. Furthermore, assume
    that the polygon $P$ has $n$ vertices 
    and fits inside a square $[0,L]^2$.
    Then the  expected number of bits to describe
    a guard placement equals
    $O\left( \log( L n/\delta) \right)$
    per guard.
\end{lemma}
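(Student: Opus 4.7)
The plan is to decompose a random edge perturbation into a ``base'' edge perturbation plus a small edge inflation, and then invoke the edge-inflation bit bound already established in Lemma~\ref{lem:ExpectedBitInflations}. The key observation that makes this work is that for any $v\in[-\delta,\delta]^n$ and any scalar $s$ with $v + s\mathbf{1}\in[-\delta,\delta]^n$ (where $\mathbf{1}=(1,\ldots,1)$), the polygon $P_{v+s\mathbf{1}}$ is precisely the edge inflation of $P_v$ by $s$: every edge shifts by the same amount $s$ outward. Thus along any line parallel to $\mathbf{1}$ inside the probability cube, the polygons form an edge-inflation family and the hypotheses of Lemma~\ref{lem:ExpectedBitInflations} are satisfied on that line.

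First I would foliate the cube $[-\delta,\delta]^n$ by lines parallel to $\mathbf{1}$, indexed by base points $u$ in a transverse hyperplane $H$. Each fiber through $u$ is an interval of length $\ell(u)=2\delta-\mathrm{range}(u)$, where $\mathrm{range}(u)=\max_i u_i-\min_i u_i$. By Fubini, the expected number of bits decomposes as an outer integral over $u\in H$ of an inner integral along the 1-parameter edge-inflation family on the fiber through $u$.

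Second, I would apply (the proof of) Lemma~\ref{lem:ExpectedBitInflations} to each fiber, treating the polygon at the lower endpoint as the ``unperturbed'' base and $s$ as the inflation parameter ranging over an interval of length $\ell(u)$. This yields an inner bound of $O(\log(Ln/\ell(u)))$ bits per guard on that fiber. One mild technical point to check is that the rounding argument of Lemma~\ref{lem:Fixed-Edge-Inflation}, originally stated for integer-coordinate polygons, remains valid for the real-coordinate base polygon $P_u$: inspecting the proof shows it only uses that every disk of radius $\ell(u)$ meets the grid $w\mathbb{Z}^2$ for $w\le\sqrt{2}\,\ell(u)$, a property independent of the polygon's coordinates.

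Third, I would perform the outer average. Translating back to a uniform draw over the cube, the expected bit count is
\[\E[B(v)] = O\!\left(\E_v\!\left[\log\frac{Ln}{\ell(v)}\right]\right),\qquad \ell(v)=2\delta-\mathrm{range}(v).\]
The main obstacle is the control of $\E[\log(1/\ell(v))]$: even though the range of $n$ i.i.d.\ uniform samples concentrates near its maximum, we must show it does not do so fast enough to blow up the logarithm. After rescaling to $[0,1]$ the range $R$ has the Beta-type density $n(n-1)r^{n-2}(1-r)$, and a short calculation gives $\E[\log(1-R)] = \psi(2)-\psi(n+1) = -\log n + O(1)$, hence $\E[\log(1/\ell(v))] = O(\log(n/\delta))$. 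Combining the three steps yields the claimed $O(\log(Ln/\delta))$ bits per guard.
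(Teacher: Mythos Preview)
Your proposal is correct and follows essentially the same approach as the paper: both decompose the cube $[-\delta,\delta]^n$ along the $\mathbf{1}$-direction, observe that this direction corresponds to an \EdgeInfl/, apply Lemma~\ref{lem:ExpectedBitInflations} on each fiber, and then average over the transverse direction. The only cosmetic differences are that the paper parameterizes the foliation explicitly (first pick which coordinates attain $\min$ and $\max$, then the range $t$, then the shift $s$) and evaluates the resulting integral $\int_0^{2\delta} t^{n-2}(2\delta-t)\log(Ln/(2\delta-t))\,dt$ by hand via harmonic numbers, whereas you phrase the same thing as $\E[\log(1-R)]$ for a Beta$(n-1,2)$ range variable and invoke the digamma identity; these are the same computation.
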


\begin{proof}
    Instead of picking a vector $v$ uniformly at random
    from $\Omega_\delta = [-\delta ,\delta]^n$, we describe another 
    random process, which leads to the same result.
    First, we guess the dimension that takes the minimum
    and the maximum entry. There are $n(n-1)$ 
    possibilities. Let $t = \max v - \min v$ be the 
    difference between the maximum and the minimum.
    The remaining entries are chosen in the interval
    $[0,t]$ uniformly at random. This gives a vector $v'$.
    Thereafter, we pick uniformly at random a shift
    $s \in [0, 2\delta - t]$. 
    The vector $v = v' + s {\bf 1}$.
    It is easy to see that this process is equivalent 
    to choosing $v$ uniformly at random from $\Omega_\delta = [-\delta ,\delta]^n$.
    See Figure~\ref{fig:Edge-Perturbation-Contains-Inflation} for 
    an illustration.
    
    \begin{figure}[htbp]
        \centering
        \includegraphics{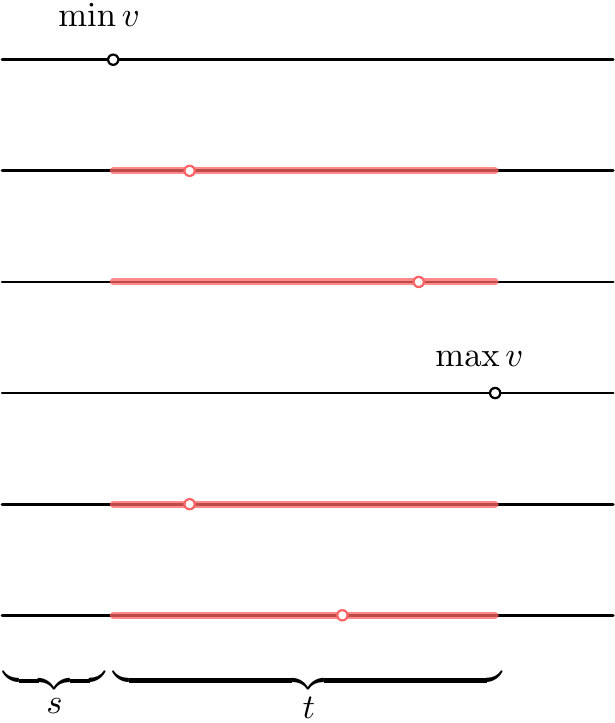}
        \caption{Illustration to pick a random vector $v\in \Omega_\delta = [-\delta,\delta]^n$.
        First guess, where the $\min$ and $\max$ are attained; then 
        guess the difference $\max - \min$; then guess all other entries, and ultimately guess the $\min$.}
        \label{fig:Edge-Perturbation-Contains-Inflation}
    \end{figure}
    
    As in the previous proof, we denote by $B(v)$ the 
    number of required bits per guard to 
    describe an optimal guarding
    on the edge-perturbed polygon $P_v$.
    \begin{align*}
        \E(B) &= \frac{1}{(2\delta)^n} \int_{v\in [-\delta ,\delta]^n} B(v)\\
            &= \frac{1}{(2\delta)^n} n(n-1) 
            \int_{t\in [0,2\delta]} \int_{v'\in [-\delta ,-\delta+t]^{n-2}} 
            \int_{s\in [0,2\delta -t]}B(v' + s{\bf 1})
    \end{align*}
    
    Here, we used the random process as described above.
    Recall $v = v' + s{\bf 1}$. And $v'$ must be padded
    in two dimensions.
    As the inner integral describes a random \EdgeInfl/ of magnitude
    $2\delta - t$, we can use Lemma~\ref{lem:ExpectedBitInflations} to
    get the following upper bound.
    \begin{align*}
            &\leq_c \frac{1}{(2\delta)^n} n(n-1) 
            \int_{t\in [0,2\delta]} \int_{v'\in [-\delta ,-\delta+t]^{n-2}} 
            (2\delta -t)   \log( Ln/(2\delta-t))          
    \end{align*}            
        As the part in the inner integral is independent
        of $v'$, we can easily integrate and get:
    \begin{align}        
            &= \frac{1}{(2\delta)^n} n(n-1) 
            \int_{t\in [0,2\delta]} t^{n-2} (2\delta -t)   \log( Ln/(2\delta-t)) \label{line:integral1}
    \end{align}
    
    In order to compute this integral, we compute 
    the following integral for $a=2\delta$
    and $b = 1/(Ln)$. We will denote the Harmonic numbers
    by $H_n = 1 + \frac{1}{2}+\ldots + \frac{1}{n}$.
    
    \begin{align*}
        & -\int_{t\in [0,a]} t^{n-2} (a-t)   \log(b(a-t))\\
        &= \int_{t\in [0,a]} t^{n-1}    \log(b(a-t)) - a\int_{t\in [0,a]} t^{n-2}    \log(b(a-t))\\
    \end{align*}
        Using some standard
        computer algebra system we get.
    \begin{align*}
        &= \frac{a^n(\log ba -H_n)}{n} - a\frac{a^{n-1}(\log ba -H_{n-1})}{n-1}\\ 
        &= \frac{a^n}{n(n-1)} \left[ {(n-1)(\log ba -H_n)} - {n(\log ba -H_{n-1})}\right]
    \end{align*}

    We simplify the second term as follows.
    Note that $H_n = O(\log n)$.

    \begin{align*}
        & {(n-1)(\log ba -H_n)} - {n(\log ba -H_{n-1})}\\
        &= \log(1/ba) - (n-1)H_{n-1} - \frac{n-1}{n} + (n-1)H_{n-1} + H_{n-1}\\
        &\leq_c \log(1/ba) + \log n \\
        &= \log(n/ba).
    \end{align*}
        
    Now plugging in this simplification in the integral from Line~(\ref{line:integral1}) gives
        
    \begin{align*}
        & \frac{1}{(2\delta)^n} n(n-1) 
            \int_{t\in [0,2\delta]} t^{n-2} (2\delta -t)   
            \log( Ln^2 /(2\delta-t))\\
        &\leq_c \frac{1}{(2\delta)^n} n(n-1) \frac{(2\delta)^n}{n(n-1)} \log(  L n^2 /2\delta)\\
        &= O\left(\log(n L  /\delta) \right).      
    \end{align*}
        
    This shows that the expected running 
    time equals $O\left(\log(n L/\delta) \right)$, as claimed.
\end{proof}

We are now ready to go through the \SmoothAna/
with respect to \VertexPert/. This part follows mainly
the part of \EdgePert/. Again the idea is to think
of a \VertexPert/ as a combination of a \VertexPert/
and an \EdgeInfl/. However, we need an additional trick.
The problem is that it is seemingly impossible to
decompose the space of \VertexPert/ into \EdgeInfl/s
explicitly. Fortunately, it turns out that knowing that 
such a decomposition exists is enough.

\begin{figure}[htbp]
    \centering
    \includegraphics{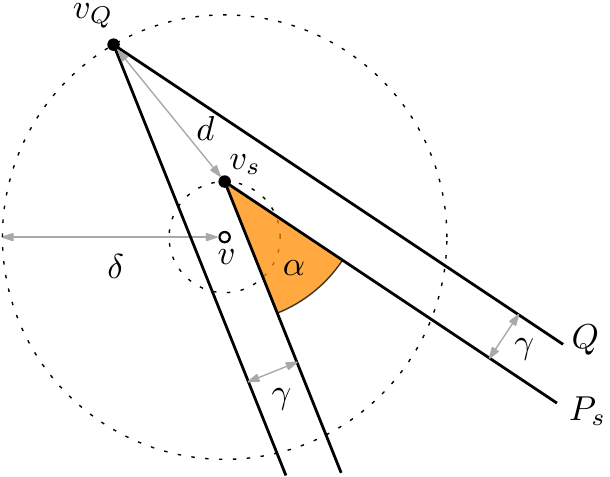}
    \caption{Lower bounding the possible \EdgeInfl/.}
    \label{fig:Max-Edge-Inflation}
\end{figure}

\begin{lemma}\label{lem:pointedness}
    Let $P$ be a polygon
    with pointedness $\beta$
    and $\delta>\delta_0>0$. Furthermore, let $P_s$ be a
    \VertexPert/ of magnitude $\delta_0$, and
    let  $Q = {(P_s)}_t$ be an  \EdgeInfl/ of $P_s$
    of magnitude $\gamma$.
    It holds that for any 
    $\gamma \leq \beta (\delta - \delta_0)$ that $Q$
    is also a \VertexPert/ of $P$ of magnitude $\delta$.
\end{lemma}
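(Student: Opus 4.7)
The plan is to track how each vertex moves under the composition \emph{vertex‑perturb then edge‑inflate}, and show that the combined displacement from the original vertex of $P$ is at most $\delta$.

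First, I would fix notation: let $v_1,\dots,v_n$ be the vertices of $P$, let $v_1',\dots,v_n'$ be the vertices of $P_s$ (so $\|v_i-v_i'\|\le\delta_0$), and let $v_1'',\dots,v_n''$ be the vertices of $Q$. The vertices $v_i''$ are determined as the intersections of the edges of $P_s$ shifted outward by $\gamma$, so by construction $Q$ has the same combinatorial structure as $P_s$ (and hence as $P$), provided $\gamma$ is small enough that no edge disappears — which follows from our global assumption that perturbations are small enough to define a polygon. The goal then reduces to bounding $\|v_i'-v_i''\|$.

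The key geometric step is elementary trigonometry at a single vertex. At vertex $v_i'$ let the two incident edges make interior angle $\alpha_i'$. Translating both supporting lines outward by $\gamma$ (perpendicular distance) moves the intersection point along the angle bisector; a direct calculation (as illustrated in Figure~\ref{fig:Max-Edge-Inflation}) gives
\[
 \|v_i'-v_i''\| \;=\; \frac{\gamma}{\sin(\alpha_i'/2)}
\]
in the convex case, and the same formula with the exterior angle $2\pi-\alpha_i'$ replacing $\alpha_i'$ in the reflex case. Thus $\|v_i'-v_i''\|\le \gamma/\sin(\theta/2)$ where $\theta$ is whichever of the interior/exterior angle at $v_i'$ is smaller.

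Next I would invoke pointedness. By the convention established in the definition of $\beta$, every interior and every exterior angle of any polygon produced by the perturbations we consider is at least $4\beta$, hence $\theta/2 \ge 2\beta$. Since $\beta$ lies in the regime where $\sin(2\beta)\ge \beta$ (trivially true because $\sin x \ge (2/\pi)x$ on $[0,\pi/2]$ and $\beta$ is bounded away from $\pi/4$), we conclude
\[
 \|v_i'-v_i''\| \;\le\; \frac{\gamma}{\sin(2\beta)} \;\le\; \frac{\gamma}{\beta} \;\le\; \delta-\delta_0.
\]
Combining with $\|v_i-v_i'\|\le\delta_0$ via the triangle inequality gives $\|v_i-v_i''\|\le\delta$ for every $i$, which is exactly the statement that $Q$ is a vertex‑perturbation of $P$ of magnitude $\delta$.

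The only real obstacle is the trigonometric displacement formula and making sure it applies uniformly in the convex and reflex cases; the pointedness assumption is tailored exactly for this, since it guarantees that both the interior and exterior half‑angle stay bounded away from $0$ and $\pi$, which is what keeps $\sin$ bounded below by $\beta$ in the denominator. Everything else is a one‑line triangle inequality.
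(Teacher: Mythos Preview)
Your proposal is correct and follows essentially the same line as the paper's own proof: compute the displacement $\|v_i'-v_i''\|=\gamma/\sin(\alpha/2)$ of a vertex under an \EdgeInfl/, use the pointedness assumption to bound $\sin(\alpha/2)\ge\beta$, and finish with the triangle inequality $\|v_i-v_i''\|\le\delta_0+(\delta-\delta_0)=\delta$. If anything, your treatment is slightly more careful than the paper's in explicitly handling the reflex case (noting that $\sin(\alpha/2)=\sin((2\pi-\alpha)/2)$, so the same bound via the exterior angle applies).
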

\begin{proof}
    See Figure~\ref{fig:Max-Edge-Inflation} for 
    an illustration of the following description.
    Consider a single vertex $v$ of $P$. Let $v_s$ denote the corresponding
    vertex in $P_s$, and $v_Q$ the corresponding vertex
    of $Q$. 
    It suffices to show that 
    $\dist(v,v_Q) \leq \delta$.
    Let $d = \dist(v_Q, v_s)$ and let $\alpha$ denote the interior
    angle at $v_s$. 
    It is easy to see that \[\gamma = d \sin(\alpha/2).\]
    As we remarked above about the assumption on our 
    perturbations, it holds that 
    $ \sin(\alpha/2) \geq \alpha/4 \geq \beta$.
    This implies $\gamma \geq d \beta$
    and in turn $\beta (\delta - \delta_0) \geq \gamma \geq d \beta$,
    by definition of $\gamma$.
    Thus we have $d\leq (\delta - \delta_0)$. 
    This implies $\dist(v,v_Q) \leq \delta_0 + d 
    \leq \delta_0 + (\delta - \delta_0) \leq \delta$.
\end{proof}

    Let $\Omega_\delta = \disk(\delta)^{n}$ be the 
    space that describes all possible \VertexPert/s
    of magnitude $\delta$ of a given polygon $P$.
    As $\delta$ is fixed for the rest of the section,
    we omit it in the notation and just write $\Omega\; (=\Omega_\delta)$.
    In order to define formally the decomposition
    of the $\Omega$ into \EdgeInfl/s,
    we need to define \EdgeDefl/.
    Given a polygon $P$, we say $Q$ is
    an  \EdgeDefl/ of $P$ of magnitude $\delta$
    if and only if $P$ is an \EdgeInfl/ of $Q$ of magnitude $\delta$.
    Let us denote by $P(\Omega) = \{P_x : x\in \Omega\}$.
    Then there exists 
    a space $\overline{\Omega}$ that
    consists of all pairs $(x,s) \in \Omega\times [0,2\delta]$ 
    such that the following conditions hold:
    \begin{itemize}
        \item For any positive \EdgeDefl/ $Q$ of $P_x$, 
        it holds $Q\not\in P(\Omega)$.
        \item The \EdgeInfl/ $(P_x)_s$ of magnitude $s$ is 
        contained in $P(\Omega)$.
    \end{itemize}
    We denote by $P_{x,s}$ the polygon $P$, for which we
    first do the \VertexPert/ $x$ and then inflate by $s$.

\begin{lemma}[Decomposition into Inflations]
\label{lem:Decomposition-Inflation}
    There is a bijection $\varphi$ 
    from $\Omega$ to  $\overline{\Omega}$
    such that $P_x = P_{\varphi(x)}$.
\end{lemma}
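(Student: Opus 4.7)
The plan is to define $\varphi(x)$ by maximally edge-deflating $P_x$ within $P(\Omega)$. For $x \in \Omega$, set
\[ S(x) = \{ r \in [0, 2\delta] : \text{there exists } z \in \Omega \text{ with } (P_z)_r = P_x \}. \]
Let $s(x) = \sup S(x)$ and let $y(x) \in \Omega$ be a witness, so that $(P_{y(x)})_{s(x)} = P_x$. Then I set $\varphi(x) = (y(x), s(x))$. First I would observe that $S(x)$ is nonempty (it contains $0$, witnessed by $z = x$) and closed: $\Omega$ is compact, and the maps $z \mapsto P_z$ and $(Q, r) \mapsto Q_r$ are continuous, so $P(\Omega)$ is closed in the Hausdorff topology, which forces $S(x)$ to be closed. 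In particular $s(x) \in S(x)$, so $y(x)$ exists.

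Next I would verify $(y(x), s(x)) \in \overline{\Omega}$. The second bullet is immediate: $(P_{y(x)})_{s(x)} = P_x \in P(\Omega)$. For the first bullet, suppose $P_{y(x)}$ admits a positive \EdgeDefl/ $Q = P_{y'}$ with $y' \in \Omega$ and $(P_{y'})_r = P_{y(x)}$ for some $r > 0$. Composing the two inflations gives $(P_{y'})_{r + s(x)} = P_x$, so $r + s(x) \in S(x)$, contradicting the maximality of $s(x)$.

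To prove bijectivity, I would exhibit the inverse $\psi : \overline{\Omega} \to \Omega$ sending $(y, s)$ to the unique $z \in \Omega$ with $P_z = (P_y)_s$; this exists by the second bullet. The identity $\psi \circ \varphi = \mathrm{id}_\Omega$ is built into the construction. For $\varphi \circ \psi = \mathrm{id}_{\overline{\Omega}}$, fix $(y, s) \in \overline{\Omega}$, set $z = \psi(y, s)$, and consider $\varphi(z) = (y(z), s(z))$. Since $s \in S(z)$ with witness $y$, we have $s(z) \geq s$. If $s(z) > s$, then $(P_{y(z)})_{s(z)} = P_z = (P_y)_s$; deflating by $s$ on both sides gives $(P_{y(z)})_{s(z) - s} = P_y$, so $P_y$ is a positive \EdgeInfl/ of $P_{y(z)} \in P(\Omega)$ by magnitude $s(z) - s > 0$, meaning $P_{y(z)}$ is a positive \EdgeDefl/ of $P_y$ that lies in $P(\Omega)$, contradicting the first bullet for $(y, s)$. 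Hence $s(z) = s$ and $P_{y(z)} = P_y$, so $\varphi(z) = (y, s)$.

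The main obstacle I anticipate is the subtle question of whether $y(x)$ (and symmetrically $z = \psi(y, s)$) is \emph{unique} as an element of $\Omega$, since the parameterization $\Omega \to P(\Omega)$ could in principle fail to be strictly injective at degenerate configurations (e.g.\ where a perturbed vertex becomes collinear with its neighbors). I would handle this by working modulo the equivalence relation $x \sim x' \Leftrightarrow P_x = P_{x'}$, which is trivial away from a measure-zero set under the standing "sufficiently small perturbation" assumption, so that on the quotient the parameterization is a homeomorphism onto $P(\Omega)$ and the construction of $y(x)$ is genuinely canonical. A minor secondary point is the composition law $((P_y)_{r_1})_{r_2} = (P_y)_{r_1 + r_2}$ used implicitly above, which is immediate from the definition of edge inflation as a parallel shift of supporting lines.
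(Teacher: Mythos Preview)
Your proposal is correct and follows the same idea as the paper: define $\varphi$ by maximally edge-deflating $P_x$ within $P(\Omega)$ and recording the deflation amount. The paper's own proof is a three-line sketch that simply asserts the existence of a maximal deflation and declares the resulting map a bijection; you have supplied the details the paper omits (attainment of the supremum via compactness, explicit verification that $\varphi(x)\in\overline{\Omega}$, construction of the inverse, and the uniqueness caveat), so your argument is strictly more complete but not a different route.
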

\begin{proof}
    For every  \VertexPert/ $P_y$, with $y\in \Omega$ 
    there exists 
    a maximal deflation $P_x$, with $x\in \Omega$. 
    Let $s$ be the amount by which we deflated.
    We define $\varphi(y) = (x,s)$.
    It holds that $P_y = P_{x,s} = P_{\varphi(y)}$
    This describes the bijection $\varphi$.
\end{proof}

Recall that we denote by $B(x)$ the 
required number of bits per guard
after perturbing by~$x$.
By abuse of notation, we denote by $B(x,s)$ the 
corresponding number of bits, for $(x,s) \in \overline{\Omega}$.
Furthermore for each $x \in \Omega$, we denote by~$t_x$, 
the maximum such that~$(x,t_x) \in \overline{\Omega}$.
We denote by $\overline{\Omega}'$, the projection of
$\overline{\Omega}$ onto its first component.
In other words $\overline{\Omega}' \subseteq \Omega $
is the set of all \VertexPert/s such that any
\EdgeDefl/ of them would not yield a polygon in 
$P(\Omega)$.

\begin{lemma}[Integral-Equivalence]
\label{lem:Integral-Equivalence}
    It holds that 
    \[\int_{x\in \Omega} B(x) = 
    \int_{x\in (\overline{\Omega}')} \int_{s\in [0,t_x]} B(x,s).\]
\end{lemma}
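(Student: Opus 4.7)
The plan is to apply the bijection $\varphi$ from Lemma~\ref{lem:Decomposition-Inflation} as a change of variables and then invoke Fubini. Since $P_x = P_{\varphi(x)}$, the bit-complexity is invariant under $\varphi$, i.e., $B(x) = B(\varphi(x))$ for every $x \in \Omega$. Writing $\varphi(x) = (y, s) \in \overline\Omega$, pushing Lebesgue measure on $\Omega$ forward through $\varphi$ yields a measure $\nu$ on $\overline\Omega$ for which
\[
\int_{x \in \Omega} B(x)\, dx \;=\; \int_{(y, s) \in \overline\Omega} B(y, s)\, d\nu(y, s).
\]

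Next, using the parameterization $\overline\Omega = \{(x, s) : x \in \overline\Omega',\; s \in [0, t_x]\}$ supplied by Lemma~\ref{lem:Decomposition-Inflation}, I would apply Fubini to express the right-hand side as an iterated integral over the base $\overline\Omega'$ and the fiber $[0, t_x]$. The implicit measure on $\overline\Omega'$ is the disintegration of $\nu$ along the inflation direction $s$; since the lemma does not specify an explicit measure on $\overline\Omega'$, this disintegration simultaneously defines the measure and establishes the identity.

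The main technical step is verifying that the $s$-conditional of $\nu$ is Lebesgue on each fiber $[0, t_x]$. For fixed $x \in \overline\Omega'$, the curve $s \mapsto \varphi^{-1}(x, s)$ in $\Omega$ is affine in $s$: each vertex translates along its angle bisector at a rate $1/\sin(\alpha/2)$ determined by the interior angle $\alpha$ of $P_x$ at that vertex (as in the proof of Lemma~\ref{lem:pointedness}), so the fiber measure is Lebesgue up to a constant Jacobian that is absorbed into the cross-sectional measure on $\overline\Omega'$. The main obstacle will be handling the stratification of $\overline\Omega'$ by combinatorial type of the maximal deflation; however, the different strata are relatively open with Lebesgue-null boundaries, so Fubini applies on each stratum and summing over them yields the claim.
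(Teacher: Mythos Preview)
Your proposal is correct and follows essentially the same idea as the paper: both rely on the bijection $\varphi$ from Lemma~\ref{lem:Decomposition-Inflation}. The paper's own proof is a single sentence (``the right and the left integral are actually integrating over the same set''), whereas you supply the measure-theoretic scaffolding---pushforward, Fubini, the affine fiber parameterization, and the absorption of the Jacobian into the unspecified base measure on $\overline{\Omega}'$---that this sentence tacitly assumes. In particular, your observation that the lemma does not fix a measure on $\overline{\Omega}'$, so the disintegration simultaneously defines the measure and proves the identity, is exactly the right reading of the paper's intent; the real content (that the fiber measure is Lebesgue in $s$, needed later in Lemma~\ref{lem:Averaging}) is something the paper also leaves implicit and which your affine-bisector argument handles.
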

\begin{proof}
    By Lemma~\ref{lem:Decomposition-Inflation}, the right and
    the left integral are actually integrating over the same set.
\end{proof}
Using the above bijection, we can now argue that
the number of bits $B(x)$ can be replaced, by the 
expected running time inside the integral,
as we will show in the following lemma.
\begin{lemma}[Use-The-Average]
\label{lem:Averaging}
    It holds that 
    \[ \int_{x\in\Omega} B(x)
    \leq
    \int_{x\in\Omega}   O\left(  \log(Ln/t_x) \right).\]
\end{lemma}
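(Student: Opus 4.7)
The plan is to chain together the two prior lemmas of the subsection. By Lemma~\ref{lem:Integral-Equivalence} the left-hand integral can be rewritten as
\[
\int_{x\in\Omega} B(x) \;=\; \int_{x\in\overline{\Omega}'} \int_{s\in[0,t_x]} B(x,s),
\]
so the task reduces to bounding the inner integral for each fixed $x\in\overline{\Omega}'$.

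For such an $x$, the polygon $P_x$ has $n$ vertices and (because we assume perturbations are small enough that everything stays inside a slight dilation of $[0,L]^2$, which gets absorbed into the constants) fits in an $O(L)\times O(L)$ square. As $s$ ranges over $[0,t_x]$, the polygon $P_{x,s}$ is exactly an \EdgeInfl/ of $P_x$ of magnitude $s$. Applying Lemma~\ref{lem:ExpectedBitInflations} to $P_x$ with inflation magnitude $t_x$ therefore gives
\[
\frac{1}{t_x}\int_{s\in[0,t_x]} B(x,s)\,ds \;\leq_c\; \log\!\left(\frac{Ln}{t_x}\right),
\]
which rearranges to
\[
\int_{s\in[0,t_x]} B(x,s)\,ds \;\leq_c\; t_x \log\!\left(\frac{Ln}{t_x}\right) \;=\; \int_{s\in[0,t_x]} \log\!\left(\frac{Ln}{t_x}\right) ds,
\]
where in the last equality we just reinserted the trivial constant integrand against the length of the interval.

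Integrating both sides over $x\in\overline{\Omega}'$ and then applying Lemma~\ref{lem:Integral-Equivalence} in the \emph{reverse} direction (now to the non-negative function $(x,s)\mapsto \log(Ln/t_x)$ instead of $B$) gives
\[
\int_{x\in\Omega} B(x) \;\leq_c\; \int_{x\in\overline{\Omega}'}\int_{s\in[0,t_x]} \log\!\left(\frac{Ln}{t_x}\right) ds \;=\; \int_{x\in\Omega} \log\!\left(\frac{Ln}{t_x}\right),
\]
which is the asserted inequality. Here the $t_x$ appearing on the right for a general $x\in\Omega$ is interpreted via the bijection $\varphi$: if $\varphi(x)=(x',s)$ then we set $t_x := t_{x'}$, consistent with the abuse of notation used when defining $B(x,s)$.

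The only real subtlety is the legitimacy of applying Lemma~\ref{lem:ExpectedBitInflations} to $P_x$ rather than to the original $P$: one needs that $P_x$ still has $n$ vertices (true because vertex-perturbation does not change combinatorial structure) and fits inside an $[0,O(L)]^2$ square (true under our standing assumption that the perturbation is small compared to the polygon). Everything else is a mechanical application of the bijection and Fubini; the Pointedness Lemma is \emph{not} needed here, but will enter in the next step when $t_x$ itself is controlled in terms of $\delta$ and $\beta$.
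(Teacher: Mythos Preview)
Your argument is correct and follows the same three-step skeleton as the paper: expand via Lemma~\ref{lem:Integral-Equivalence}, bound the inner $s$-integral using Lemma~\ref{lem:ExpectedBitInflations}, and collapse back through the bijection $\varphi$. The one place you diverge is the meaning of $t_x$ on the right-hand side: the paper's $t_x$ (as used downstream together with Lemma~\ref{lem:pointedness}) is the \emph{remaining} inflation room from $P_x$, so that if $\varphi(x')=(x,s)$ then $t_{x'}=t_x-s$; accordingly, the paper inserts the extra substitution $\log(Ln/t_x)\to\log(Ln/(t_x-s))$ inside the inner integral before changing variables back. Your redefinition $t_{x'}:=t_x$ instead takes the full trajectory length, which is at least the remaining room, so your integrand is pointwise no larger than the paper's and the stated inequality follows a fortiori.
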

\begin{proof}
    We start by expanding the integral using Lemma~\ref{lem:Integral-Equivalence}.
    \begin{align*}
        \int_{x\in\Omega} B(x) 
            & = \int_{x\in \overline{\Omega}'} \int_{s\in [0,t_x]} B(x,s)\\
            &\leq_c \int_{x\in \overline{\Omega}'} t_x \,   \log(Ln/t_x)
    \end{align*}
    The last line follows from Lemma~\ref{lem:ExpectedBitInflations},
    as the inner integral corresponds to an
    \EdgeInfl/.
    Using  that $  t c =\int_{s\in[0,t]} c$, for every constant $c$, we get
    \begin{align*}
            &= \int_{x\in \overline{\Omega}'} \int_{s\in [0,t_x]}  \log(L n /t_x) \\
            &= \int_{x\in \overline{\Omega}'} \int_{s\in [0,t_x]}  \log(L n /(t_x - s) 
    \end{align*}
        Note that for $(x,s) = \varphi(x')$, it holds that
        $t_x -s = t_{x'}$. 
        (We refer to $\varphi$ from 
        Lemma~\ref{lem:Decomposition-Inflation}.)
        This implies
    \begin{align*}
            &= \int_{x\in\Omega} \log(L n/t_x).
    \end{align*}
    Here the last line 
    follows from the bijection 
    explained before Lemma~\ref{lem:Integral-Equivalence}.
\end{proof}

\begin{lemma}\label{lem:Expected-Vertex-Perturbation-Time}
    Let $P$ be 
    a polygon
    and $\delta$ denotes the magnitude
    of a \VertexPert/.
    Furthermore, assume that the polygon $P$ has $n$ 
    vertices and fits inside a square $[0,L]^2$ and has pointedness $\beta$.
    Then the 
    expected number of bits is upper bounded by
    \[O\left(  \log \frac{n L}{\delta \beta}     \right).\]
\end{lemma}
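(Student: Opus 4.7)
The plan is to combine Lemma~\ref{lem:Averaging} with Lemma~\ref{lem:pointedness}. After dividing by $|\Omega| = (\pi\delta^2)^n$, Lemma~\ref{lem:Averaging} reduces the task to showing
\[ \E_{x\in \Omega}\bigl[\log(nL/t_x)\bigr] \;=\; O\bigl(\log(nL/(\delta\beta))\bigr), \]
where $t_x$ is the maximum \EdgeInfl/ of $P_x$ that still lies in $P(\Omega)$. The bound on $t_x$ is the only thing that distinguishes the \VertexPert/ case from the \EdgeInfl/ case already handled in Lemma~\ref{lem:ExpectedBitInflations}.

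For a pointwise lower bound on $t_x$, I would apply Lemma~\ref{lem:pointedness} directly. For $x = (x_1,\ldots,x_n)\in\Omega$ set $\delta_0(x) := \max_i |x_i|$, so that $P_x$ is itself a \VertexPert/ of $P$ of magnitude $\delta_0(x)$. The lemma then yields $t_x \geq \beta(\delta - \delta_0(x))$. Substituting, it suffices to prove $\E\bigl[\log(1/(\delta - \delta_0))\bigr] \leq \log(1/\delta) + O(\log n)$.

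To finish, I would compute the expectation using the distribution of the maximum radius. Since each $x_i$ is uniform on $\disk(\delta)$, the normalized radius $|x_i|/\delta$ has density $2u$ on $[0,1]$ (its square being uniform), and the normalized maximum $M := \delta_0/\delta$ therefore has density $2n\, m^{2n-1}$. Writing $T := \delta - \delta_0 = \delta(1-M)$ and changing variables gives
\[ \E[\log(1/T)] \;=\; \log(1/\delta) \;+\; \int_0^1 -\log(1-m)\cdot 2n\,m^{2n-1}\,dm. \]
The remaining integral evaluates to the harmonic number $H_{2n} = O(\log n)$, as one sees either by termwise integration of the power series of $-\log(1-m)$ (the partial fractions $\tfrac{2n}{k(2n+k)} = \tfrac{1}{k} - \tfrac{1}{k+2n}$ telescope) or by the tail-integral identity $\int_0^1 \tfrac{1-(1-u)^{2n}}{u}\,du = H_{2n}$.

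The entire conceptual content sits in Lemma~\ref{lem:pointedness}: pointedness lets us view a generic \VertexPert/ as sitting at the center of a nontrivial \EdgeInfl/ whose bit-complexity we already control. Assembling the three pieces yields the claimed $O(\log(nL/(\delta\beta)))$ bound. The only step requiring any care is the harmonic-number integral, which is standard; everything else is a direct rewriting through the bijection supplied by Lemma~\ref{lem:Decomposition-Inflation}.
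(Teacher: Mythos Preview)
Your proposal is correct and follows essentially the same route as the paper: both invoke Lemma~\ref{lem:Averaging}, bound $t_x \geq \beta(\delta-\delta_0)$ via Lemma~\ref{lem:pointedness}, and then integrate against the density of the maximum radius $\delta_0$. Your identification of the final integral as the harmonic number $H_{2n}$ is in fact cleaner than the paper's version, which defers to a computer-algebra computation analogous to the one in Lemma~\ref{lem:Expected-Edge-Perturbation-Time}.
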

\begin{figure}[tbph]
        \centering
        \includegraphics{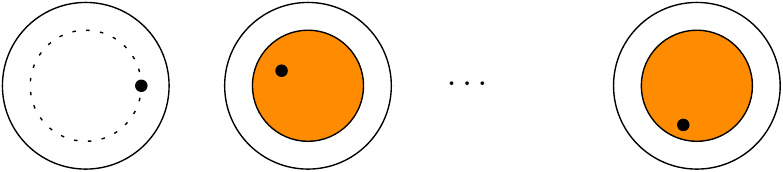}
        \caption{Here, one vertex has distance $t$ and all other vertices
        have distance at most $t$ from its original vertex.}
        \label{fig:Vertex-Integral}
    \end{figure}
\begin{proof}
    Let us start with the definition of the 
    expected running time and Lemma~\ref{lem:Averaging}.
\begin{align*}
    \E (B) &= \frac{1}{(\pi \delta)^n} \int_{x\in \Omega} B(x) \\
            &\leq_c \frac{1}{(\pi \delta)^n} \int_{x\in \Omega}  \log ( Ln /t_x ).
\end{align*}

    Note that $\Omega$ is
    the Cartisian product of $n$ disks.
    And we think of $x$ as picking $n$ points from
    $n$ disks independently.
    See Figure~\ref{fig:Vertex-Integral} for an illustration.
    Similar to the proof of Lemma~\ref{lem:Expected-Edge-Perturbation-Time}, 
    we can think of a new random experiement to
    select $x = (x_1,\ldots,x_n)\in\Omega = \disk(\delta)^n$ as follows.
    First guess a number $t\in[0,\delta]$,
    which represents the maximum distance of
    every point $x_i$ to its disk center, and guess, 
    which of the $n$ points is attaining this maximum
    distance. The remaining points have all at
    most distance $t$ from the center.
    
    Thus we can rewrite the integral as follows.
\begin{align*}
    \int_{x\in \Omega}  \log ( Ln/t_x )
    &= n \int_{t\in [0,\delta]} 2\pi t 
    \int_{x'\in \,{\disk(t)}^{n-1}} \log ( Ln/t_x )\\
\end{align*}   
Here the factor $2\pi t$ comes from the circumference of
a disk of radius $t$.
Using that $\beta (\delta - t) \leq t_x$, 
by Lemma~\ref{lem:pointedness}, we get
\begin{align*}   
    &\leq_c n \int_{t\in [0,\delta]} 2\pi t 
    \int_{x'\in \, {\disk(t)}^{n-1}} \log\frac{Ln}{\beta (\delta - t)}
\end{align*}   
Note that the inner integral does not 
depend on $x'$. Furthermore the area 
of a disk of radius $t$ equals $\pi t^2$, which gives us
\begin{align*}
    &= n \int_{t\in [0,\delta]} 2\pi t 
    (\pi t^2)^{n-1} \log\frac{Ln}{\beta (\delta - t)} \\
    &= 2 n \pi^n \int_{t\in [0,\delta]} t^{2n-1} \log\frac{Ln}{\beta (\delta - t)}\\
\end{align*}
    Using some computer algebra system, we can compute the integral.
    This is exactly the same computation as in Lemma~\ref{lem:Expected-Edge-Perturbation-Time}, and gives us
\begin{align*}   
    &\leq_c 2n   \, \pi^n  \frac{\delta^n\, \log \frac{ L n^2}{\beta \delta } }{2n}\\
    &=  \pi^n \delta^n\, \log \frac{ L n^2}{\beta \delta }. 
\end{align*}

This gives
    \[
    \E(B)    \leq_c \frac{1}{(\pi \delta)^n}      \, \pi^n \, \delta^n \,\log \frac{n L}{\beta \delta }
    =  O\left( \log \frac{n L}{\beta \delta }\right).
    \qedhere 
    \]
\end{proof}

\section{Expected Non-deterministic Time}
\label{sec:ExpectedNDT}

This section is devoted to the proof of the following theorem.
\Upper*

Note that the non-deterministic part is easy. There are at most
$n$ guards in an optimal guarding and each can
be described with a logarithmic number of bits, according
to Theorem~\ref{thm:MainBits}.
One may wonder whether it is possible to improve
the running time to the form $O(k\log \ldots)$,
where $k$ is the optimal number of guards.
The problem is that $k$ is not well-defined, as it 
may vary depending on the perturbation.

It remains to describe a deterministic algorithm to check
if a given set of guards is indeed guarding a given polygon.
It was shown by Efrat and Har-Peled~\cite{EfratH06}
that for a {\em simple} polygon $P$ we can check in
$O(kn\log k \log n)$ time if $k$ given guards 
see $P$ completely.
\begin{lemma}[\cite{EfratH06}]
    It can be checked in $O(kn\log k \log n)$ time
    if a given set of $G$ guards is correctly
    guarding a given polygon, without holes, on $n$ vertices.
    This algorithm works on a \realRAM/.
\end{lemma}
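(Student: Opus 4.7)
The plan is to reduce the coverage test to computing a union of visibility polygons inside $P$. For each guard $g_i \in G$, I would first compute the visibility polygon $V_i = \vis(g_i)$. Since $P$ is a simple polygon on $n$ vertices, each $V_i$ has combinatorial complexity $O(n)$ and can be built in $O(n)$ time by the classical algorithm of Joe and Simpson. This first step therefore costs $O(kn)$ time on a \realRAM/.

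Next, I would verify that $\bigcup_{i=1}^k V_i = P$ by merging the $V_i$'s along a balanced binary tree of depth $\lceil \log k \rceil$. The analysis rests on two ingredients: (i) the union of any sub-collection of visibility polygons of points inside a simple polygon $P$ has combinatorial complexity linear in the total input size, so that intermediate unions do not blow up as one proceeds up the tree; (ii) the overlay of two polygonal regions confined to $P$ with total size $m$ can be computed in $O(m \log n)$ time by a standard red--blue segment-intersection sweep. Combining (i) and (ii), the total work at each of the $\lceil \log k \rceil$ levels is $O(kn \log n)$, which yields $O(kn \log k \log n)$ overall. Equality with $P$ is then decided by scanning the resulting planar subdivision in linear time: the guards cover $P$ if and only if no face inside $P$ lies outside every $V_i$.

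The main obstacle is ingredient~(i): one must argue that the intermediate unions along the tree stay linear in size, since otherwise the divide-and-conquer would pay an extra factor of $k$ per level and collapse to something like $O(k^2 n \log n)$. This linear-union property is the nontrivial combinatorial content established by Efrat and Har-Peled, and it is what drives the analysis to the claimed $O(kn \log k \log n)$ bound; once it is in hand, the overlay sweeps and tree merging are routine and the whole argument goes through on the \realRAM/.
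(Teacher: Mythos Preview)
The paper does not prove this lemma at all: it is stated as a direct citation of Efrat and Har-Peled~\cite{EfratH06}, with no accompanying argument. (The paper only supplies its own proof for the subsequent folklore lemma covering polygons \emph{with} holes.) So there is no ``paper's own proof'' to compare your proposal against.

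That said, your sketch is a faithful outline of the Efrat--Har-Peled algorithm: compute each visibility polygon in linear time, then merge along a balanced binary tree, relying on the fact that unions of visibility polygons inside a simple polygon have controlled complexity so that each of the $\lceil \log k\rceil$ levels costs $O(kn\log n)$. You correctly flag the linear-union bound as the nontrivial ingredient and correctly attribute it to the cited source rather than claiming it as obvious. For the purposes of this paper, simply citing the lemma (as the authors do) is the appropriate level of detail; your write-up would be suitable if one wanted to expand the citation into a self-contained sketch.
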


For polygons with holes, we describe a simple 
(probably well-known) algorithm that runs in $O(n^2k^2)$ time.
Here $k$ is defined as the number of given guards.
As we have not found a reference in the literature, we
repeat it here for the benefit of the reader.
\begin{lemma}[Folklore]\label{lem:CheckGuard4}
    It can be checked in $O(k^2n^2)$ time
    if a given set of $G$ guards is correctly
    guarding a given polygon, potentially with holes, on $n$ vertices.
    This algorithm works on a \realRAM/.
\end{lemma}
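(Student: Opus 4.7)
My plan is to reduce the verification to a single planar-arrangement computation. The crucial observation is that the given guards cover $P$ if and only if the union $\bigcup_{i=1}^k V_i$ of their visibility polygons contains $P$, so everything rests on our ability to check this set-containment.

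The first step is to compute each visibility polygon $V_i$ explicitly. For a polygon with holes on $n$ vertices, a visibility polygon from a single point has $O(n)$ edges (each reflex vertex contributes at most one ``window'', and each edge of $P$ contributes a bounded number of pieces), and it can be computed in $O(n \log n)$ time by plane sweep. Running this for all $k$ guards produces a set $\mathcal{E}$ of $O(kn)$ segments in $O(kn \log n)$ total time, well within the budget.

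The second step is to build the arrangement $\mathcal{A}$ of $\mathcal{E}$ together with $\partial P$. An arrangement of $N = O(kn)$ segments has combinatorial complexity $O(N^2)$ and can be constructed in $O(N^2) = O(k^2 n^2)$ time on the \realRAM/, for instance by the standard incremental construction whose running time is controlled by the zone theorem. Every face of $\mathcal{A}$ is by design contained in a fixed subset of the $V_i$'s, since membership in $V_i$ cannot change without crossing an edge of $V_i$. I then run a breadth-first traversal of the dual graph of $\mathcal{A}$ starting from an outside face, maintaining for each visited face a counter $c(f)$ equal to the number of $V_i$'s containing $f$ and a boolean ``inside $P$'' flag; each dual-graph edge crossed updates both in $O(1)$ time. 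At the end, the guards cover $P$ if and only if $c(f) \geq 1$ for every face $f$ flagged as lying inside $P$, a final $O(k^2 n^2)$ check.

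The main technical annoyance is degeneracies: several visibility windows may pass through a common point, windows may be collinear with edges of $\partial P$, and so on. These can be dealt with in the standard way, by symbolic perturbation or a tie-breaking lexicographic order, which preserves the asymptotic bounds. All primitives required (segment intersection, side tests, comparison of coordinates of intersection points) are constant-time on the \realRAM/, so the overall running time is $O(k^2 n^2)$ as claimed.
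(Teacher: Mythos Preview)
Your proposal is correct and follows essentially the same approach as the paper: compute the $k$ visibility polygons (total $O(kn)$ edges), overlay them in $O((kn)^2)$ time, and verify that $P$ is covered. The only cosmetic difference is in the overlay/verification step---the paper computes the union $Q$ of the visibility polygons by a sweepline and then checks that $\partial Q = \partial P$, whereas you build the full arrangement and traverse its faces with a coverage counter; both cost $O(k^2n^2)$ and either one suffices.
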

\begin{proof}
 For polygons with $h$ holes, it is possible to compute the 
visibility regions of each guard in $O(n+h\log h)$ 
time~\cite{VisibilityHoles}. Note
that the number of vertices and edges of those polygons is still $O(n)$.
(Every pair of adjacent edges of the visibility 
polygon contains at least one vertex
of the original polygon and every vertex is incident to at most two
edges.)
Then we have in total $m = O(kn)$ edges and vertices.

We can now compute the union $Q$ of all those polygons in
$O(m\log m + l)$ time, where $l = O(m^2) = O(k^2 n^2)$ is the total number of 
edge intersections of all the given edges.
This can be done by a simple sweepline algorithm of all the
visibility polygons~\cite[Chapter 7]{o1998computational}.

Thereafter, we can check if the set of vertices of $Q$ are the same
(in the same order) as the vertices of our original polygon $P$.
This can be done in linear time.
The running time is dominated by $O(k^2n^2)$, by taking the union
of the visibility polygons.   
\end{proof}

\section{Non-deterministic Time with High Probability}
\label{sec:With-High-Probability}

\UpperProbability*

    We first focus on the non-deterministic part of the algorithm.
    Let us assume that there are some numbers 
    $0 = t_0< t_1<\ldots < t_\ell = \delta$
    such that for all $i$ and $s\in [t_{i-1},t_{i})$
    it holds that $|OPT(P_s)|$ is constant.
    As $| OPT(P_s)| $ is 
    monotonically decreasing it holds that $\ell \leq n$.
    We denote by $\delta_i = t_i - t_{i-1}$.
    Consider the set 
    \[S 
    = \bigcup_{i=1,\ldots,\ell} \, [t_{i-1}+\tfrac{\delta p}{2n} , t_i] .\]
    Note that $\lvert S \rvert \geq \delta(1 - p/2)$.
    Let us define $\overline{S} = [0,\delta] \setminus S$.
    Note that $|\overline{S} \cap \Omega_{\delta,q}|\leq n + p(q+1)/2 $.
    Thus, for $q>2n/p$, it holds that 
    $|\overline{S} \cap \Omega_{\delta,q}|\leq p(q+1) = p\, |\Omega_{\delta,q}|$.
    Therefore a random perturbation will be in the set
    $S$ with probability at least $1-p$.
    Now, fix some $s\in S$ and $i \in \{1,\ldots,\ell\}$,
    with $s\in [t_{i-1}+\delta p/2n, t_i)$.
    Note that $\lvert OPT(P_{t_{i-1}})\rvert  = \lvert OPT(P_s)\rvert$.
    Furthermore $P_s$ is an \EdgeInfl/ of $P_{t_{i-1}}$ of magnitude
    at least $\delta p /2n$. Thus, by Lemma~\ref{lem:Fixed-Edge-Inflation}, 
    a grid of width $w = \sqrt{2}\delta p /2n = \delta p /\sqrt{2}n$ contains an optimal
    guarding of $P_s$. We can describe each  
    guard with 
    \[\lceil \log L\rceil + \lceil\log (\sqrt{2}n / \delta p)  \rceil  
    = O\left(\log \frac{L n}{\delta p}\right)\]
    bits. Thus clearly $O(n\log \frac{L n}{\delta p})$ bits in total
    are sufficient as any polygon on $n$ vertices 
    can be guarded by at most $n$ guards.
    This finishes the non-deterministic part of the algorithm.
    
    It remains to argue that we can also 
    check optimal guards on a \TuringMachine/.
    Recall that we mentioned in Lemma~\ref{lem:CheckGuard4}
    how to check a given set of guards in
    $O(n^4)$ time on a \realRAM/.
    It remains to argue that this algorithm 
    also runs in polynomial time on a \TuringMachine/.
    \begin{figure}[tbph]
        \centering
        \includegraphics{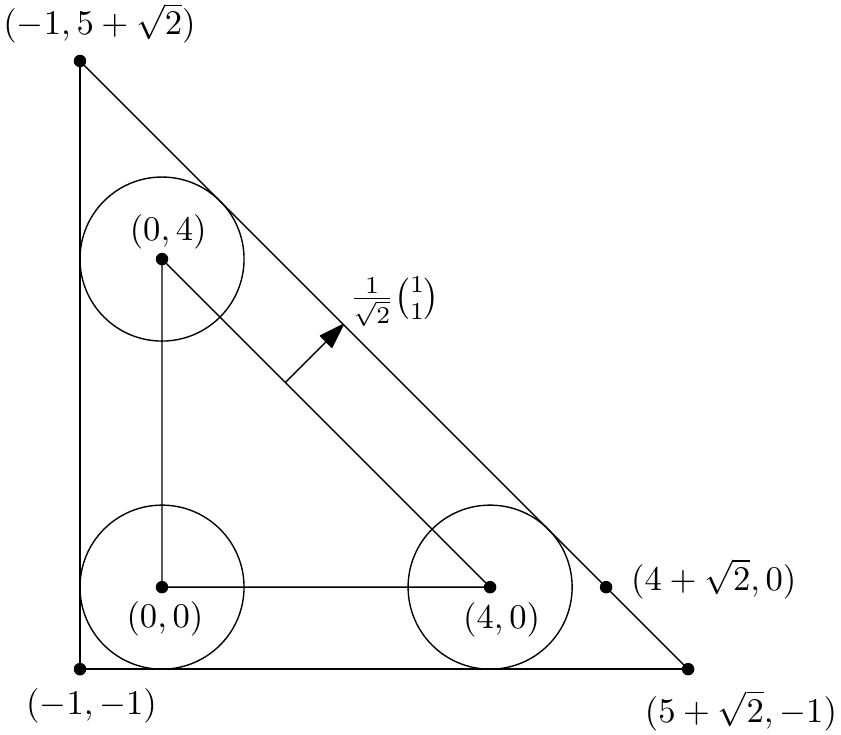}
        \caption{A triangle with integer vertices and an \EdgeInfl/ by $1$.
        The \EdgeInfl/ has vertices that can be
        described using radicals.}
        \label{fig:Triangle-Inflation}
    \end{figure}
    First note that even after a rational perturbation,
    the resulting vertices may not be 
    rational, see Figure~\ref{fig:Triangle-Inflation}.
    Let us start with the way that a vertex can be represented.
    For that purpose, suppose that $r\in \Omega_{\delta,q}$ 
    describes an \EdgeInfl/. Assume that $u,v,w \in \Q^2$ are three
    consecutive vertices of the original polygon $P$.
    We will denote the new vertices by $u',v',w'$.
    We can compute the normal vectors of the edges $uv$
    and $vw$ using square roots. 
    Using the normal vectors, we can compute the slope
    and the $y$-intercept of the supporting lines
    of $u'v'$ and $v'w'$. From there, we can compute the intersection
    point of those lines, which define the point $v'$.
    The resulting point can be described using square roots.
    To be more precise, the only place we will leave the field
    $\Q$ is, when we compute the normal vectors. 
    Denote by $a$ and $b$ the Euclidean distance
    between $u,v$ and $v,w$ respectively.
    Then all calculations take place in the field $\Q[a,b]$.
    
    Now, consider the algorithm described in the 
    proof of Lemma~\ref{lem:CheckGuard4}.
    The first set $A$ of geometric objects that is computed are
    segments defined by combinations of guards and 
    vertices.
    The second set $B$ of objects that is computed are 
    intersections of segments in $A$.
    Note that each object in $A\cup B$ can be 
    defined using a constant number of vertices and/or
    guards. No further geometric objects are computed.
    Once all those geometric objects are computed,
    all further queries on the coordinates of 
    those objects are just comparisons of $x$ and $y$-coordinates.
    Such a comparison reduces to solving a 
    {\em sum of square roots problem}, which is not in general
    known to be polynomial time solvable~\cite{OpenProblemProject33}.
    It is one of the major open problems in 
    computational geometry, to find a polynomial
    time algorithm.
    In our case however, we know that only a finite 
    number of square roots are involved, and this case 
    is known to be polynomial time solvable~\cite{qian2006much}.

    Note that the time of the deterministic part of the algorithm 
    depends polynomially on $\log q$.

\section{Non-deterministic Time Using Algebra}
\label{sec:NPtime}
This section is devoted to describe an
algorithm that runs in expected deterministic time.
The advantage of the next theorem is that it really
works on a non-deterministic \TuringMachine/, without
a \realRAM/. And the described algorithm works correctly 
in all cases, not just with high probability. 
However, the algorithm uses algebraic methods in a very small
fraction of the cases.

\NPTime*
\begin{proof}
    We use the \NaiveAlgo/ as described in 
    Theorem~\ref{thm:MainHighProb} with probability 
    $1- p = 1- n^{-cn}$,
    where $c$ is some sufficiently large constant
    to be determined later.
    The non-deterministic running time in that case equals
    $O(n\log\frac{Ln}{\delta p }) = O(n^2\log\frac{Ln}{\delta})$.
    In the remaining cases, we are using algebraic algorithms
    that run in $n^{c'n}$ time~\cite{basu2006algorithms,EfratH06,EfratH02}.
    We set $c = c'$.
    The expected non-deterministic running time can be estimated by
    \[(1-p)\  O\left(n^2\log\frac{Ln}{\delta}\right) + pn^{cn} =  O(n^2\log\frac{Ln}{\delta}).\]
    The deterministic time to check if the guess made
    by the \NaiveAlgo/ is correct is the same as 
    in Theorem~\ref{thm:MainHighProb}.~\qedhere
\end{proof}

\section{Discussion of Perturbation Models}
\label{sec:DiscussPertubations}
In this section we compare various models of perturbation.
This is important to critically evaluate
the contribution of our findings. See Table~\ref{tab:Perturbation-Quality},
for a summary.

\begin{table}[htbp]
    \centering
    \begin{tabular}{l|cccc}
         & Vertex-Pertur. & Minkowski-Infl. & Edge-Infl. & Edge-Pert. \\
         \hline
        remains a polygon & YES & NO & YES & YES \\
        number of vertices preserved & YES & NO & YES & YES \\
        angles preserved  & NO  & YES & YES & YES \\
        shape preservation & YES  & YES & YES & YES \\
        dimension of randomness & $2n$ & $1$ & $1$ & $n$ \\
        small Hausdorff distance & YES & YES & NO & NO \\
        fair to edges & NO & YES & YES & YES \\
        fair to vertices & YES & YES & NO & NO \\
        small visibility change & YES & NO & NO & NO 
    \end{tabular}
    \caption{A comparison of the four models of perturbation.}
    \label{tab:Perturbation-Quality}
\end{table}

At first glance, \VertexPert/s seem to be the
most natural type of perturbations, as we usually 
specify a polygon by describing its vertices.
However, bear in mind that practical instances are constructed
with specific features in mind. 
Also recall that \SmoothAna/ aims to analyze
instances as similar as possible to the input.
Thus, if a perturbation destroys a constructed feature,
it is less meaningful then a perturbation that maintains that feature.
In fact, if we would not alter the instance at all,
then we would simply get the worst case analysis.
Let us now have a second glance at the various
ways to perturb a polygon, and see which behave most nicely
under the described criteria.

One of the most natural criteria is whether we maintain 
the property of being  
a polygon after the perturbation. Clearly, \MinkInfl/s
violate this criteria. However, note that we could also take the
Minkowski-sum of a polygon with a square, and the resulting object would remain polygon.

We might also like to know whether the number of vertices remains. Again, this fails only for the \MinkInfl/s.

A feature that seems to be important, is whether the 
angles between edges are maintained.
It is, for instances, often the case in buildings that
walls are rectilinear to one another. We might not care 
about the precise width of a corridor, but we might find it odd
if the two walls of the corridor are not parallel. 
Here, only the \VertexPert/ fails miserably.

Another criteria is whether the general shape
is maintained. 
All presented perturbations satisfy this criteria.
But one could think of perturbations where
the shape is not maintained at all.
For instance by permuting the order of the vertices.

As already mentioned, we want to perturb as little 
as possible. We can take the dimension of the probability
space as measure for the added randomness.
Here, the inflations are best, as their probability
space is one dimensional.
However, note that one could argue that the perturbation
is designed to destroy irrational solutions.

Another important aspect is by how much visibility is altered
by the perturbation. One can see that the visibility regions
are altered more by \MinkInfl/ and \EdgePert/,
in comparison to the \VertexPert/.
The reason is that reflex vertices might be moved a lot
in case that the inner angle at that vertex is particularly large.
This might be the reason, why our analysis is easier
    with the \MinkInfl/. See Figure~\ref{fig:Visibility-Altered} for an illustration.
\begin{figure}[htbp]
    \centering
    \includegraphics{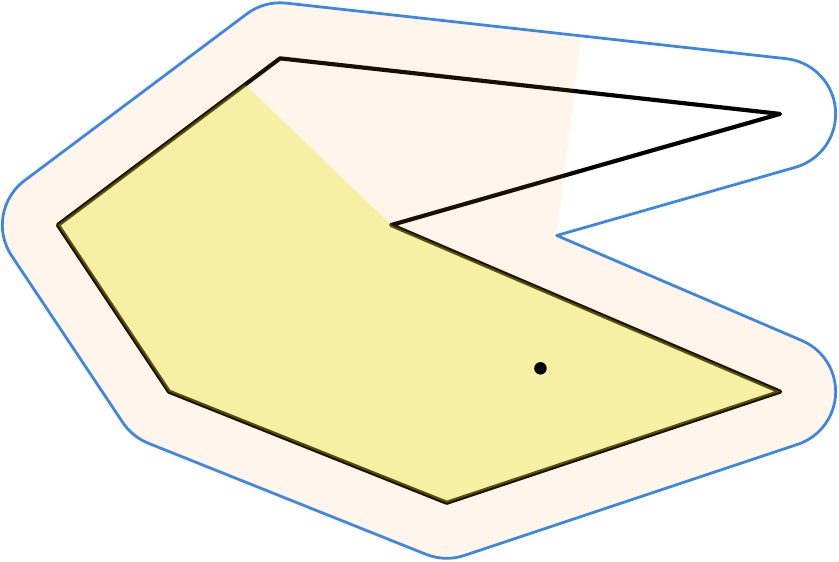}
    \caption{Left: The visibility region after some \VertexPert/ is 
    altered comparatively little. Right: The \MinkInfl/ alters the
    visibility a lot. This might be the reason, why our analysis is easier
    with the \MinkInfl/.}
    \label{fig:Visibility-Altered}
\end{figure}

A somewhat weird criteria, is a voting of the points on the boundary.
After a perturbation of magnitude $t$, every point could be 
asked the following question: 
``Do you feel perturbed by magnitude $t$?''
For the \EdgeInfl/, it is easy to see that most points
on the edge, will say yes, whereas vertices and points close
to vertices might say, that they feel perturbed much more then 
$t$. On the other hand, in the \VertexPert/ model all vertices,
will say that they are moved by $t$, whereas most points
on the edges might say that they are moved far less.
Thus loosely speaking, \VertexPert/s treat edges
in a ``dishonest manner'', whereas \EdgePert/s treat
vertices in a ``dishonest manner''.
It seems difficult to design a model of perturbation
that is both fair to the edges and to the vertices.
(Combining Edge and \VertexPert/, is unfair to 
both edges and vertices.)

Finally, one might be interested in maintaining a small
Hausdorff-distance to the original polygon after the
perturbation. This is not the case for \EdgeInfl/ and 
\EdgePert/s as can be seen in 
Figure~\ref{fig:Big-Hausdorff}.

\begin{figure}[htbp]
    \centering
    \includegraphics{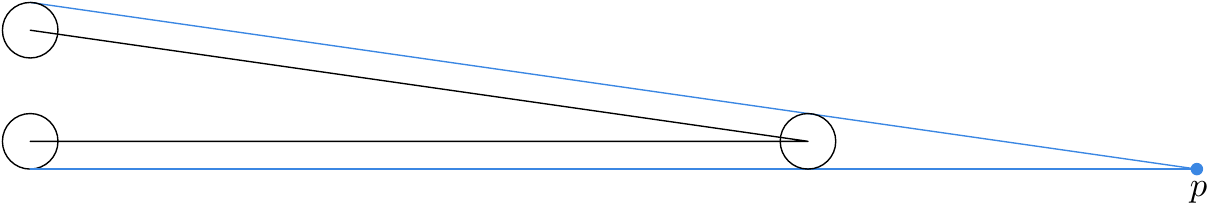}
    \caption{Altough, the \EdgeInfl/ is fairly small the 
    Hausdorff distance to the perturbed polygon is 
    comparatively large, due to the very small convex angle.}
    \label{fig:Big-Hausdorff}
\end{figure}

There is another aspect that we so far have swept totally
under the rug. Spielman and Teng did not considered 
uniform distributions, but Gaussian distributions.
Their argument being that practical instances are often 
altered by random {\em noise} and noise is for various
reasons best modeled by the Gaussian distribution.
We believe that the Gaussian distribution makes life 
considerably more difficult and we don't expect any
difference in the results. 
The reason is that both Gaussian and uniform distribution
choose from the same set of instances. In both cases,
the result is that under both distributions the majority of instances
behaves good. The only difference is that
we ``drew'' instances with some small difference in distribution.

We summarize that none of the considered
models of perturbation is ideal in all criteria
and preference of one over the other might be
a question of taste.
The authors find the preservation of angles and
low added randomness most important.

\section{Open Problems and Conclusion}
\label{sec:Conclusion}

In this paper, we have shown that {\em typically}
a polygon can be guarded optimally by guards
with rational coordinates and those coordinates
might be very small.
This is one of the few positive
results on the \AGP/. 

As a corollary, the \NaiveAlgo/ takes linear non-deterministic time
and $O(n^4)$ deterministic time on polygon with holes.
Thus, we ask:
\begin{question}\label{quest:Deterministic}
    Does there exists an $O(kn)$ algorithm, to decide if
    a given set of $k$ guards, is correctly guarding
    a given polygon with holes on $n$ vertices?
\end{question}

While, we showed that \SmoothAna/ can overcome
the obstacle of \ER/-hardness for the \AGP/,
most lower bounds, like \NP/-hardness, W[1]-hardness,
and inapproximability are expected to continue to hold, even 
after perturbations. 
Note that practical implementations usually consist of 
a geometric part, which is theoretically 
polynomial time solvable and a combinatorial
part, which solves a set-cover problem, which is NP-hard in general.
De Rezende et al.~\cite{engineering} report as follows:
\begin{adjustwidth}{1cm}{1cm}
 ``Still the geometric routines made up for over
90\% of the runtime.''
\end{adjustwidth}
In other words their algorithms spend most time on the geometric part
and least time on the part of the algorithm that is \NP/-hard.
It would be interesting to find a theoretical explanation
for this phenomenon.
\begin{question}
    Does there exist a theoretical explanation for the 
    reported findings, that the \NP/-hard part of the implemented
    algorithms takes the least time?
\end{question}


We showed that the grid is a candidate set in
the \SmoothAna/ model. 
While it is nice to have a candidate set at all,
we would really like to have one of polynomial size.
All the \NP/-hardness proofs that the authors are aware of
yield instances with easily identifiable polynomial 
sized candidate sets. Thus it is not clear, if such a candidate
set might not exist for every polygon, after a small perturbation.
\begin{question}
    Does there exist an algorithm that runs in smoothed
    polynomial time and outputs a candidate set
    of polynomial size?
\end{question}

In this paper, we studied the so-called \NaiveAlgo/,
which is the simplest possible algorithm for the 
\AGP/. But there are implementations of much more clever
algorithms that perform well in practice.
\begin{question}
    Can we show that one of the more sophisticated practical 
    algorithms works correctly in the smoothed sense?
\end{question}

Here, we studied the \AGP/, but there are many other
\ER/-hard problems. Maybe, we can find that they
also have a non-deterministic polynomial time
algorithm, which runs correctly in the 
\SmoothAna/ sense.
A possible candidate is Motion Planning.
Note that there are many variants of Motion
Planning, and we currently do not even know
which variants are \ER/-hard.
But often algebraic methods are the only way to solve
the problem provably correctly~\cite{basu2006algorithms}.
Another example would be Nash-Equilibria~\cite{SchaeferS17}.

As we have shown that essentially all 
models of perturbation behave in the same way, we think 
that it should be sufficient in the future to regard
\EdgeInfl/ only. This model is very nice as it preserves
angles, the probability space is one dimensional
and it is usually easiest to analyze. 
Furthermore, our techniques strongly suggest that 
any result on \EdgeInfl/ carries over to the 
other models, in a tedious but standardized way.

    \paragraph{Acknowledgments}
    We want to thank Stefan Langerman, Jean Cardinal, John Iacono 
    and Mikkel Abrahamsen for helpful discussion on the presentation
    of the results. We want to thank Joseph O'Rourke for 
    pointing us to an algorithm to check if a given
    set of guard positions is correct.

Part of the research was conducted while visiting KAIST, and we thank KAIST and BK21 for their support and for providing an excellent working environment.

Andreas Holmsen was supported by the Basic Science research Program through the National Research Foundation of Korea (NRF) funded by the Ministry of Education (NRF-2016R1D1A1B03930998).

    Tillmann Miltzow acknowledges the generous support from the ERC Consolidator Grant 615640-ForEFront
    and the Veni EAGER.
    
\bibliographystyle{abbrv}
\bibliography{references}
\end{document}